\documentclass[onecolumn, 10pt]{IEEEtran}
\usepackage{amsmath,amssymb,amsthm}
\usepackage{bm}
\usepackage{url}
\usepackage{tikz}

\newtheorem{theorem}{Theorem}
\newtheorem{lemma}[theorem]{Lemma}

\theoremstyle{definition}
\newtheorem{definition}[theorem]{Definition}
\newtheorem{example}[theorem]{Example}
\theoremstyle{remark}
\newtheorem{remark}{Remark}

\title{Linear Programming Relaxations for Goldreich's Generators over Non-Binary Alphabets}
\author{Ryuhei~Mori, Takeshi Koshiba, Osamu Watanabe, and Masaki Yamamoto%
\thanks{This work was supported by MEXT KAKENHI Grant Number 24106008.}%
\thanks{R. Mori is with the Department of Mathematical and Computing Science, Graduate School of Information Science and Engineering,
Tokyo Institute of Technology, Shibaura, Minato-ku, Tokyo, 108-0023 Japan
(e-mail: mori@is.titech.ac.jp).}%
\thanks{T. Koshiba is with the Division of Mathematics, Electronics and Informatics,
Graduate School of Science and Engineering, Saitama University,
255 Shimo-Okubo, Sakura-ku, Saitama, 338-8570 Japan 
(e-mail: koshiba@mail.saitama-u.ac.jp)
}%
\thanks{O. Watanabe is with the Department of Mathematical and
Computing Science, Graduate School of Information Science and
Engineering, Tokyo Institute of Technology, Ookayama, Meguro-ku,
Tokyo, 152-0038 Japan (e-mail: watanabe@is.titech.ac.jp).
}%
\thanks{M. Yamamoto is with the Department of Computer and Information Science, Seikei University,
Musashino-shi, Tokyo, 180-8633 Japan (e-mail: yamamoto@st.seikei.ac.jp).
}%
}

\begin{document}
\maketitle
\begin{abstract}
Goldreich suggested candidates of one-way functions and pseudorandom generators included in $\mathsf{NC}^0$.
It is known that randomly generated Goldreich's generator using $(r-1)$-wise independent predicates with $n$ input variables
and $m=C n^{r/2}$ output variables is not pseudorandom generator with high probability for sufficiently large constant $C$.
Most of the previous works assume that the alphabet is binary and use techniques available only for the binary alphabet.
In this paper, we deal with non-binary generalization of Goldreich's generator and
derives the tight threshold for linear programming relaxation attack using local marginal polytope for randomly generated Goldreich's generators.
We assume that $u(n)\in \omega(1)\cap o(n)$ input variables are known.
In that case, we show that when $r\ge 3$, there is an exact threshold
$\mu_\mathrm{c}(k,r):=\binom{k}{r}^{-1}\frac{(r-2)^{r-2}}{r(r-1)^{r-1}}$ such that
for $m=\mu\frac{n^{r-1}}{u(n)^{r-2}}$,
the LP relaxation can determine linearly many input variables of Goldreich's generator if $\mu>\mu_\mathrm{c}(k,r)$, and that
the LP relaxation cannot determine $\frac1{r-2} u(n)$ input variables of Goldreich's generator if $\mu<\mu_\mathrm{c}(k,r)$.
This paper uses characterization of LP solutions by combinatorial structures called stopping sets on a bipartite graph,
which is related to a simple algorithm called peeling algorithm.
\end{abstract}

\section{Introduction}
Goldreich suggested candidates of one-way functions (OWFs) using predicates for constant number of binary variables
 and an expander bipartite graph~\cite{goldreich2000candidate}.
It is conjectured that Goldreich's idea also gives pseudorandom generators (PRGs)~\cite{cryan2001pseudorandom}, \cite{mossel2006varepsilon}.
Since every bit in the output of Goldreich's generators only depends on a constant number $k$ of input bits,
Goldreich's generators are in $\mathsf{NC}^0$, which means that the generators are extremely simple.
Applebaum, Ishai, Kushilevitz showed that if OWF (PRG) exists in $\mathsf{NC}^1$ then OWF (PRG) exists also in $\mathsf{NC}^0$, respectively~\cite{doi:10.1137/S0097539705446950}.
That means that OWF and PRG exist in $\mathsf{NC}^0$ on mild assumptions, e.g., hardness of factoring.
Mossel, Shpilka and Trevisan showed that for large $k$, there is a polynomially stretching Goldreich's generator being a small bias generator, which is
a weak pseudorandom generator only fooling linear tests~\cite{mossel2006varepsilon}.
In contrast to the case of small bias generators, it is difficult to show that there exists an one-way function or a pseudorandom generator even in general.

For analyzing Goldreich's generators, many papers investigate properties of randomly generated planted constraint satisfaction problems (CSPs)
since randomly generated bipartite graph is an expander with high probability.
Two types of attacks for randomly generated Goldreich's generators have been known.
The first one is an algebraic attack called ``correlation attack''~\cite{mossel2006varepsilon}.
For given predicate $P$, if $P$ can be expressed as degree-$d$ polynomial on $\mathbb{F}_2$,
then the correlation attack distinguishes
output of Goldreich's generator using $P$ from uniform random variables
if $m>\sum_{i=1}^d\binom{n}{i}$ where $n$ and $m$ are input length and output length of generator, respectively.
Since the correlation attack is a linear test, Goldreich's generator is not small bias generator if $m>\frac1{d!}n^d$.
The second type of attack is based on reduction to planted noisy MAX $r$-LIN problem~\cite{v003a002}.
Here, the concept of $(r-1)$-wise independence of predicate $P$, which will be defined later in this paper,
 gives the critical order of the output length which separates the secure region and the insecure region for the second type of attack
while the degree $d$ of polynomial representation of $P$ on $\mathbb{F}_2$ gives the critical order for the correlation attack.
For given $(r-1)$-wise independent predicate $P$,
Goldreich's generator using the predicate $P$ is insecure as small bias generator for $m> Cn^{r/2}$ for sufficiently large constant $C$
since when $m>Cn^{r/2}$ there is a pair of correlated output variables with high probability from the birthday paradox~\cite{mossel2006varepsilon}.
There also exists attack to Goldreichs' generator using planted noisy MAX $r$-LIN problem
as OWF when $m=Cn^{r/2}\sqrt{\log n}$ for sufficiently large $C$~\cite{v003a002}, \cite{odonell2014goldreich}.
No attack is known for Goldreich's PRG for $m=o(n^{\min\{d, r/2\}})$.
Since the maximum of $\min\{d, r/2\}$ among all predicates for $k$ variables is $\frac12\lfloor\frac23 k\rfloor$, 
it is conjectured that the optimum stretch by Goldreich's PRG with the locality $k$ is $m=o(n^{\frac12\lfloor\frac23 k\rfloor})$~\cite{odonell2014goldreich}.
In~\cite{odonell2014goldreich}, it is shown that the semidefinite programming (SDP) relaxation using Sherali-Adams${}^+$ hierarchy
cannot distinguish output of Goldreich's generator with small modification
 from uniform random variables if $m=O(n^{r/2-\delta})$ for any $\delta>0$.
The above results uses techniques available only for the binary alphabet.
It is not obvious that the above results can be generalized to non-binary alphabets.

In this work, we deal with a generalization of Goldreich's generator to non-binary alphabet and local functions with multiple output variables.
We assume that $u(n)=\omega(1)\cap o(n)$ of $n$ input variables for Goldreich's generator are known,
and derive an exact threshold for the number $m$ of local functions in the generator
 on linear programming (LP) relaxation attack using a simple polytope called local marginal polytope.
On the local marginal polytope, we show that when $r\ge 3$, there is an exact threshold 
$\mu_\mathrm{c}(k,r):=\binom{k}{r}^{-1}\frac{(r-2)^{r-2}}{r(r-1)^{r-1}}$ such that
for $m=\mu\frac{n^{r-1}}{u(n)^{r-2}}$,
the LP relaxation can determine linearly many input variables of Goldreich's generator if $\mu>\mu_\mathrm{c}(k,r)$, and that
the LP relaxation cannot determine $\frac1{r-2} u(n)$ input variables of Goldreich's generator if $\mu<\mu_\mathrm{c}(k,r)$.
This paper uses characterization of LP solutions by combinatorial structures called stopping sets on a bipartite graph,
which is related to a simple algorithm called peeling algorithm.
Since peeling algorithm naturally appears in many problems~\cite{mitzenmacher2012peeling}, our results may have applications also in other areas.

\section{Pseudorandom generators, Goldreich's generators}
\subsection{One-way function and pseudorandom generator}
\begin{definition}[One-way function]
Let $\mathcal{X}$ be a finite alphabet.
For $n\in\mathbb{N}$ and $m\in\mathbb{N}$,
$g\colon \mathcal{X}^n\to\mathcal{X}^m$ is said to be a $\epsilon$-secure one-way function if
\begin{equation*}
\Pr\left(h(g(X))\in g^{-1}(g(X))\right)<\epsilon
\end{equation*}
for any function $h\colon\mathcal{X}^m\to\mathcal{X}^n$ which has a probabilistic polynomial-time algorithm
where $X$ denotes a uniformly distributed random variable on $\mathcal{X}^n$. 
\end{definition}

\begin{definition}[Pseudorandom generator]
Let $\mathcal{X}$ be a finite alphabet.
For $n\in\mathbb{N}$ and $m\in\mathbb{N}$,
$g\colon \mathcal{X}^n\to\mathcal{X}^m$ is said to be a $\epsilon$-secure pseudorandom generator if
\begin{equation*}
\left|\Pr\left(h(g(X))=1\right) - \Pr\left(h(U)=1\right)\right| < \epsilon
\end{equation*}
for any function $h\colon\mathcal{X}^m\to\{0,1\}$ which has a probabilistic polynomial-time algorithm
where $X$ and $U$ denote a uniformly distributed random variable on $\mathcal{X}^n$ and $\mathcal{X}^m$, respectively.
\end{definition}
If OWF and PRG are $1/p(n)$-secure for any polynomial $p(n)$, they are said to be strongly OWF and strongly PRG, respectively.

\subsection{Goldreich's generator and its generalization}\label{subsec:goldreich}
Existence of one-way function and pseudorandom generator is one of the biggest conjecture in computer science and theory of cryptography.
Goldreich suggested extremely simple candidates of one-way function on the binary alphabet in~\cite{goldreich2000candidate}.
Let $k\in\mathbb{N}$ be the locality of the generator.
Then, predicates $P_a\colon \{0,1\}^k\to \{0,1\}$ are fixed for every $a\in\{1,2,\dotsc,m\}$.
For each $a\in\{1,2,\dotsc,m\}$, a $k$-tuple $(i^{(a)}_1,\dotsc,i^{(a)}_k)$ not including duplication is chosen from $\{1,2,\dotsc,n\}$.
For $a\in\{1,2,\dotsc,m\}$, $a$-th bit of output of Goldreich's generator is defined as $P_a(x_{i^{(a)}_1},\dotsc,x_{i^{(a)}_k})$
where $x_i$ denotes $i$-th input bit for $i\in \{1,2,\dotsc,n\}$.
An example of Goldreich's generator is described by a bipartite graph in Fig.~\ref{fig:goldreich}.
Goldreich conjectured that for almost all predicates, the generator is a one-way function if the bipartite graph is expander when $m=n$.
When $m>n$, Goldreich's generator is also candidate of pseudorandom generator~\cite{mossel2006varepsilon}.

In this paper, we consider a generalization of Goldreich's generator to non-binary alphabet and also to local functions with multiple output variables.
Fix the locality $k\in\mathbb{N}$ and the output length $l\in\{1,2,\dotsc,k-1\}$ of the local functions.
Let $q\in\{2,3,4,\dotsc\}$ be the size of alphabet and $[q]:=\{1,2,\dotsc,q\}$ be the alphabet set.
Then, surjective functions $f_a\colon [q]^k \to [q]^{l}$ are fixed for every $a\in\{1,2,\dotsc,m\}$.
For each $a\in\{1,2,\dotsc,m\}$, a $k$-tuple $(i^{(a)}_1,\dotsc,i^{(a)}_k)$ not including duplication is chosen from $\{1,2,\dotsc,n\}$.
Similarly to the original Goldreich's generator,
the output of generalized Goldreich's generator is defined as
$(f_1(x_{i^{(1)}_1},\dotsc,x_{i^{(1)}_k}),\dotsc,f_m(x_{i^{(m)}_1},\dotsc, x_{i^{(m)}_k}))\in[q]^{lm}$.

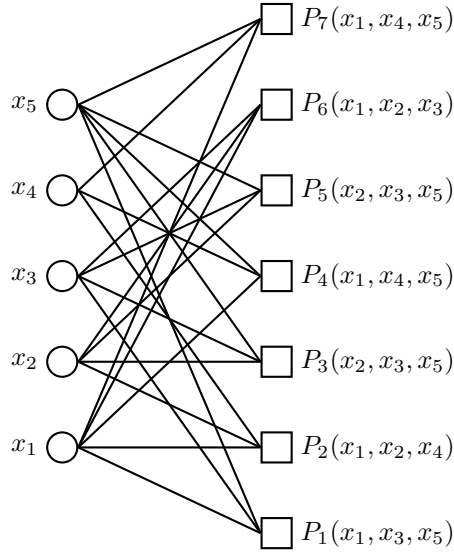
\begin{figure}
\begin{center}
\begin{tikzpicture}
[yshift=20pt, rotate=90,scale=0.57, inner sep=0mm, C/.style={minimum size=4mm,circle,draw=black,thick},
S/.style={minimum size=4mm,rectangle,draw=black,thick}, label distance=1mm]
\node (1) at (0,5.0) [C,label=left:$x_1$] {};
\node (2) at (2,5.0) [C,label=left:$x_2$] {};
\node (3) at (4,5.0) [C,label=left:$x_3$] {};
\node (4) at (6,5.0) [C,label=left:$x_4$] {};
\node (5) at (8,5.0) [C,label=left:$x_5$] {};
\node (a1) at (-2,0) [S,label=right:${P_1(x_1,x_3,x_5)}$] {};
\node (a2) at (0,0) [S,label=right:${P_2(x_1,x_2,x_4)}$] {};
\node (a3) at (2,0) [S,label=right:${P_3(x_2,x_3,x_5)}$] {};
\node (a4) at (4,0) [S,label=right:${P_4(x_1,x_4,x_5)}$] {};
\node (a5) at (6,0) [S,label=right:${P_5(x_2,x_3,x_5)}$] {};
\node (a6) at (8,0) [S,label=right:${P_6(x_1,x_2,x_3)}$] {};
\node (a7) at (10,0) [S,label=right:${P_7(x_1,x_4,x_5)}$] {};
\draw (a1.west) to (1.east) [thick];
\draw (a1.west) to (3.east) [thick];
\draw (a1.west) to (5.east) [thick];
\draw (a2.west) to (1.east) [thick];
\draw (a2.west) to (2.east) [thick];
\draw (a2.west) to (4.east) [thick];
\draw (a3.west) to (2.east) [thick];
\draw (a3.west) to (3.east) [thick];
\draw (a3.west) to (5.east) [thick];
\draw (a4.west) to (1.east) [thick];
\draw (a4.west) to (4.east) [thick];
\draw (a4.west) to (5.east) [thick];
\draw (a5.west) to (2.east) [thick];
\draw (a5.west) to (3.east) [thick];
\draw (a5.west) to (5.east) [thick];
\draw (a6.west) to (1.east) [thick];
\draw (a6.west) to (2.east) [thick];
\draw (a6.west) to (3.east) [thick];
\draw (a7.west) to (1.east) [thick];
\draw (a7.west) to (4.east) [thick];
\draw (a7.west) to (5.east) [thick];
\end{tikzpicture}
\end{center}
\caption{A bipartite graph representation of Goldreich's generator.}
\label{fig:goldreich}
\end{figure}

\subsection{(r-1)-wise independence and MDS codes}
In the following, several properties of local functions are defined which are useful for analysis of some attackers to
Goldreich's generator.

\begin{definition}[$(r-1)$-wise independence]
Let $X_1,\dotsc,X_k$ be uniformly distributed random variables on $[q]$.
The function $f$ is said to be $(r-1)$-wise independent 
if for any $T\subseteq [k]$ of size at most $r-1$,
the distribution of $(X_i)_{i\in T}$ and $f(X_1,\dotsc,X_k)$ is uniform on $[q]^{|T|}\times [q]^l$.
\end{definition}

\begin{example}
We can regard the alphabet $[q]$ as an Abelian group $\mathbb{Z}/q\mathbb{Z}$.
A function $f(\bm{x})=x_1+x_2+\dotsb+x_r + f'(x_{r+1},\dotsc,x_{k})$
is $(r-1)$-wise independent for any $f'\colon [q]^{k-r}\to [q]$.
\end{example}

Let us consider uniformly distributed random variables $X_1,\dotsc,X_k$ on $[q]$.
If the function $f$ is $(r-1)$-wise independent, even if one knows values of the output $f(X_1,\dotsc,X_k)$ and 
$r-2$ of the $k$ input random variables,
one cannot guess any one of the other unknown input variable for $f$.
This property is useful for deriving lower bounds for some algorithms trying to invert Goldreich's generators.
On the other hand, if one knows the output of $(r-1)$-wise independent function and $r-1$ variables in the input,
one may guess one of the other unknown input variable.
We can consider the extremal $(r-1)$-wise independent functions
 for which the output and $r-1$ of the input variables uniquely fix all of the other $k-r+1$ input variables.

\begin{definition}[Maximum distance separable code]
A subset $M\subseteq [q]^{k}$ is said to be maximum distance separable (MDS) code of the dimension $r-1$
if $|M|=q^{r-1}$ and $d=k-r+2$ where $d$ is the minimum distance of the code $M$, i.e.,
$d:=\min_{\bm{x}\in M, \bm{x}'\in M, \bm{x}\ne\bm{x}'}|\{i\in[k]\mid x_i\ne x'_i\}|$.
\end{definition}

\begin{example}[Trivial MDS codes]\label{exm:trivialMDS}
$M_{\mathrm{SPC}}=\{(x_1,\dotsc,x_k)\in[q]^k\mid x_1+\dotsb+x_k=0\}$ is an MDS code
of the dimension $k-1$.
The support of uniquely extendible constraint is also an MDS code with the above parameters~\cite{connamacher2004exact}.
$M_{\mathrm{R}}=\{(x_1,\dotsc,x_k)\in[q]^k\mid x_1=x_2=\dotsb=x_k\}$ is an MDS code
of the dimension $1$.
MDS codes of the dimension $k-1$ and of the dimension $1$ are called trivial MDS codes.
\end{example}

Assume that for given $f\colon[q]^k\to[q]^{k-r+1}$,
an inverse image $f^{-1}(\bm{y})\subseteq [q]^k$ is an MDS code of the dimension $r-1$ for any $\bm{y}\in[q]^{k-r+1}$.
In that case, we say that a local function $f$ has MDS inverse image.
If $f$ has MDS inverse image, then
for any $\bm{y}\in[q]^{k-r+1}$, for any $S\subseteq[k]$ of size $r-1$ and
for any $(x_i)_{i\in S}\in[q]^{|S|}$,
 there exists one and only one $(x_i)_{i\in [k]\setminus S}\in[q]^{k-|S|}$ 
such that $\bm{x}\in f^{-1}(\bm{y})$ since the minimum distance of the inverse image $f^{-1}(\bm{y})$ is $k-r+2$
 and since the dimension of the inverse image $f^{-1}(\bm{y})$ is $r-1$.
Hence, $f$ is $(r-1)$-wise independent and also extremal one, i.e., the output of $f$ and $r-1$ of input variables
uniquely determine the other $k-r+1$ input variables.

For a prime power $q$, one can regard the alphabet $[q]$ as a finite field $\mathbb{F}_q$.
If one has a linear MDS code $M$ of length $k$ and dimension $r-1$, 
it is easy to construct a local function $f\colon [q]^k\to [q]^{k-r+1}$ with MDS inverse image as follows.
Let $H$ be a $(k-r+1)\times k$ parity-check matrix for $M$, i.e., $M=\{\bm{x}\in\mathbb{F}_q^k\mid H\bm{x}=0\}$.
Then, the function $f(\bm{x}):= H\bm{x}$ has MDS inverse image since $f^{-1}(\bm{y})=\{\bm{x}\in\mathbb{F}_q^k\mid H\bm{x}=\bm{y}\}$
is an MDS code of the dimension $r-1$ for any $\bm{y}\in \mathbb{F}^{k-r+1}$.
Note that one can also construct a local function with MDS inverse image for any (not necessarily prime power) $q$
 by using an (not necessarily linear) MDS code of length $k+l$.

For $r=2$ and $r=k$, there exist trivial MDS codes for arbitrary $q$ as in Example~\ref{exm:trivialMDS}.
However, for $r\in\{3,4,\dotsc,k-1\}$,
the existence of MDS codes for given parameters $k$ and $q$ is not obvious.
It is known that, if $r-1\ge q+1$, then MDS code does not exist unless $k=r$.
When $r-1\le q$, it is conjectured that except for some special cases, all MDS codes have codelength $k$ at most $q+1$.
The doubly extended Reed-Solomon codes defined on a finite field are MDS codes with codelength $q+1$ which can have arbitrary dimension.
Hence, for given $k$ and $r\in\{3,4,\dotsc, k-1\}$ and for a prime power $q\ge k-1$,
there exists an MDS code of arbitrary dimension on the alphabet $[q]$.

\section{Randomly generated Goldreich's generator and main result}
For analyzing Goldreich's generator, randomly generated bipartite graph has been considered since randomly generated bipartite graph
is expander with high probability.
The random ensemble of Goldreich's generator is defined as follows.
The alphabet size $q$, the input length $n$ of the whole generator, the number $m$ of local functions,
 the input length $k$ of local functions and the output length $l$ of local functions are fixed.
The local functions $f_a\colon [q]^k\to [q]^{l}$ are also fixed for every $a\in\{1,2,\dotsc,m\}$.
The above parameters and local functions are given and not randomly generated.
Then, 
$k$-tuple $(i^{(a)}_{1},\dotsc, i^{(a)}_{k})$ of distinct indices of variables are uniformly chosen from $\{1,\dotsc,n\}$
for each $a\in\{1,2,\dotsc,m\}$ independently. 
Then, the generator is defined as $(f_1(x_{i^{(1)}_1},\dotsc, x_{i^{(1)}_k}),\dotsc, f_m(x_{i^{(m)}_1},\dotsc, x_{i^{(m)}_k}))$
as in Section~\ref{subsec:goldreich}.
Hence, in this random ensemble, a Goldreich's generator is uniformly chosen from all of the $[n(n-1)\dotsm(n-k+1)]^m$ possible choices.

We also consider randomly generated planted $k$-CSP as follows.
A Goldreich's generator $g$ is randomly generated in the above way.
The input values $\bm{x}^*\in[q]^n$ is uniformly chosen from $[q]^n$.
Then, the output $\bm{y}\in[q]^{lm}$ of $g$ for the input $\bm{x}^*$ is computed.
A pair of Goldreich's generator $g$ and the output $\bm{y}\in[q]^{lm}$ is an instance of randomly generated planted $k$-CSP.
We regard $g(\bm{x})=\bm{y}$ as a $k$-CSP for variables $\bm{x}\in[q]^n$.
Here, the input values $\bm{x}^*$ is called a planted assignment, planted configuration or planted solution.

For attacking PRG generated by Goldreich's generator, we can consider the following strategy.
Let $L_g:=\{\bm{y}\in[q]^{lm}\mid \exists \bm{x}\in[q]^n, g(\bm{x})=\bm{y}\}$.
Then, an attacker $h_g$ is defined as
\begin{align*}
h_g(\bm{y})= 1 &\iff \bm{y}\in L_g\\
h_g(\bm{y})= 0 &\iff \bm{y}\notin L_g.
\end{align*}
If the input for $h_g$ is generated by $g$, it always returns 1, i.e., $\Pr(h_g(g(X))=1)=1$ where $X$ is a uniform random variable on $[q]^n$.
If the input for $h_g$ is uniform random variable on $[q]^{lm}$, 
the probability that $h_g$ returns 1 is $\Pr(h_g(U)=1)=|L_g|/q^m\le q^{n-m}$, which is small if $m$ is much larger than $n$ where $U$ is a uniform random variable on $[q]^m$.
From the above observation, Goldreich's generator is not secure against the attack $h_g$, which is of course not necessarily efficiently computable.
In this paper, we try to find a certificate $\bm{x}\in g^{-1}(\bm{y})$ for $\bm{y}\in L_g$ for implementing $h_g$.
There is also another attack called a ``correlation attack'', which tries to find a certificate for $\bm{y}\notin L_g$.
Generally it is difficult to show that $L_g$ does not have a polynomial-time algorithm since $L_g$
has a short certificate $\bm{x}\in[q]^n$ and has a verifier in $\mathsf{NC}^0$.

As mentioned in the previous section, the concept of $(r-1)$-wise independence expresses security of
randomly generated Goldreich's generator as OWFs and PRGs for some algorithms.
The following results for $r\ge 3$ are known for randomly generated Goldreich's generator on the binary alphabet for $l=1$.

\begin{lemma}[\cite{mossel2006varepsilon}, \cite{v003a002}, \cite{odonell2014goldreich}]
Assume that all local functions are not $r$-wise independent for some $r\ge 3$.
If $m=Cn^{r/2}\sqrt{\log n}$ for sufficiently large constant $C$, then there is a polynomial-time algorithm inverting
the randomly generated Goldreich's generator with high probability.
If $m=Cn^{r/2}$ for sufficiently large constant $C$, then there is a polynomial-time algorithm distinguishing
the output of the randomly generated Goldreich's generator from uniform random variables with high probability.
\end{lemma}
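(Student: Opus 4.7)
The plan is to exploit the Fourier expansion of the local predicate $P$ over $\mathbb{F}_2$. Since $P$ is not $r$-wise independent, some $S^*\subseteq[k]$ of size at most $r$ has a nonzero Fourier coefficient $\hat{P}(S^*)$; I may assume $|S^*|=r$, as a smaller $|S^*|$ only sharpens the bound. For each output $a\in[m]$ set $T_a:=\{i^{(a)}_j : j\in S^*\}\subseteq[n]$; then the output bit $y_a$ is a noisy linear equation on $(x_i)_{i\in T_a}$ with constant bias $\hat{P}(S^*)$. The whole generator thus reduces to a planted noisy MAX-$r$-LIN instance whose equations are supported on uniformly random $r$-subsets of $[n]$.

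For the distinguisher at $m=Cn^{r/2}$, I would apply the birthday paradox to the $\binom{n}{r}\sim n^r/r!$ possible supports. A pair $(a,b)$ satisfies $T_a=T_b$ with probability $\sim r!/n^r$, so the expected number of colliding pairs is $\binom{m}{2}\cdot r!/n^r=\Theta(C^2)$; choosing $C$ large enough makes the probability of at least one collision tend to any desired constant, uniformly in $n$. For such a pair, $y_a\oplus y_b$ is uniform under a uniformly random input but has bias $2\hat{P}(S^*)^2$ under the planted distribution, since the non-$S^*$ neighbours of the two outputs touch disjoint input variables with probability $1-O(1/n)$, whence the cross-term noise vanishes in expectation. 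The distinguisher scans the bipartite graph for a colliding pair in time $O(m^2)$ and outputs $y_a\oplus y_b$, achieving $\Omega(1)$ distinguishing advantage.

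For inversion at $m=Cn^{r/2}\sqrt{\log n}$, I would feed the reduced planted noisy MAX-$r$-LIN instance to the spectral/SOS algorithm of Bogdanov--Qiao~\cite{v003a002} (equivalently O'Donnell--Witmer~\cite{odonell2014goldreich}). The analytic core is a trace-power/matrix-Bernstein computation: an appropriate matrix flattening of the empirical tensor of equation coefficients carries a rank-one signal of magnitude $\sim m\cdot\hat{P}(S^*)$ aligned with $\bm{x}^{*\otimes r}$, while the random-noise spectrum is controlled by $\sqrt{m\cdot n^{r/2}\log n}$; once $m\gg n^{r/2}\sqrt{\log n}$, the signal dominates, so the leading eigenvector rounds to $\bm{x}^*$ and the global $\{0,1\}^n$ ambiguity is fixed by one further evaluation of $g$.

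The main obstacle is the inversion step. Unlike the distinguisher, the $\sqrt{\log n}$ regime is too sparse for any naive majority vote over $r$-subsets: the expected number of outputs hitting a fixed $r$-subset is $m\cdot r!/n^r=o(1)$, so recovery must be genuinely global. The technical crux is therefore the matrix-concentration step that shows the random-noise flattening has operator norm $o(m\hat{P}(S^*))$ with high probability; the extra $\sqrt{\log n}$ factor is precisely what this bound needs to beat the $n^{O(r)}$ union bound over eigenvector directions, which is why the two thresholds $m\sim n^{r/2}$ and $m\sim n^{r/2}\sqrt{\log n}$ separate the distinguishing and inversion regimes.
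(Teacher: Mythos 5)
This lemma is stated in the paper purely by citation; the paper gives no proof, so there is no internal argument to compare against. Your sketch does track the cited sources: the distinguisher is the Mossel--Shpilka--Trevisan birthday-collision argument, and the inverter appeals to the planted-noisy-$r$-LIN machinery of Bogdanov--Qiao and O'Donnell--Witmer. The distinguisher half is essentially right (expected collision count $\Theta(C^2)$, XOR a colliding pair), modulo two small points. First, the bias you quote, $2\hat{P}(S^*)^2$, is not quite the right constant; the clean computation gives $\hat{P}(S^*)^2$ (sign conventions aside). Second, the argument that the cross-terms vanish is glossed over: after conditioning on $T_a=T_b$ with the remaining neighbourhoods disjoint, the XOR expectation is a sum over \emph{all} pairs of Fourier sets that project to the same subset of $T_a$, not only the $S^*$--$S^*$ term. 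This sum is dominated by $\hat{P}(S^*)^2>0$ because $(r-1)$-wise independence kills every nonempty Fourier coefficient of size below $r$, but since the lemma as stated only assumes ``not $r$-wise independent,'' you should say explicitly why no cancellation occurs (or reduce WLOG to the smallest nonvanishing $|S^*|$ and note that then nothing smaller can contribute).

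The inverter half is where there is a genuine gap. The one concrete estimate you write down is internally inconsistent: if the signal has magnitude $m\hat{P}(S^*)$ and the noise operator norm were $\sqrt{m\cdot n^{r/2}\log n}$, then signal dominating noise requires $m\gtrsim n^{r/2}\log n$, not the stated threshold $m\gtrsim n^{r/2}\sqrt{\log n}$. As written your bound does not prove the lemma. The $\sqrt{\log n}$ threshold actually requires a sharper sparse-random-matrix bound (a trace-power / Füredi--Komlós-style estimate on the flattened noise tensor), and the clean $n^{r/2}\times n^{r/2}$ flattening only makes sense for even $r$; odd $r$ needs an unbalanced flattening or equation-pairing trick, which you do not mention. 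Since the paper itself treats the lemma as a black-box citation this is not fatal to the narrative, but as a self-contained proof attempt the inversion step is missing the argument that actually produces the $\sqrt{\log n}$, and the exponent you supply would only establish a strictly weaker $n^{r/2}\log n$ threshold.
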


\begin{lemma}[\cite{odonell2014goldreich}, \cite{v008a012}]
Assume that all local functions are $(r-1)$-wise independent for some $r\ge 3$.
If $m=O(n^{r/2-\delta})$ for some $\delta>0$,
 then a semidefinite programming relaxation with high probability cannot fix any input variable of 
the randomly generated Goldreich's generator with small modification.
\end{lemma}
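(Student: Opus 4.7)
The plan is to exhibit a feasible pseudo-solution for the Sherali--Adams$^+$ SDP hierarchy at constant level whose single-variable marginals are uniform on $[q]$; the existence of such a pseudo-solution implies that the relaxation cannot pin down the value of any input variable. The construction proceeds in three layers: a per-constraint pseudo-distribution derived from $(r-1)$-wise independence, a graph-expansion property that holds with high probability over the random ensemble, and an inductive ``peeling'' procedure that glues these pseudo-distributions into a globally consistent SDP solution.

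First I would observe that $(r-1)$-wise independence of each local function $f_a$ means that, under a uniformly random input, the joint distribution of the output $f_a(X_1,\ldots,X_k)$ together with any $r-1$ of the input coordinates is uniform. Hence, after conditioning on the observed output value, the marginal pseudo-distribution on any $r-1$ of the input coordinates of $f_a$ is exactly uniform and is consistent with the constraint. This is the local ingredient that permits uniform single-variable marginals without violating a constraint in isolation.

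Next I would verify that the random bipartite graph is with high probability an $(r-1)$-unique-neighbor expander on sets of constraint-nodes of size up to $n^{\Theta(\delta)}$: for $m=O(n^{r/2-\delta})$, a standard first-moment calculation shows that the expected number of constraint sets $T$ with $|T|\le n^{\Theta(\delta)}$ and fewer than $(r-1)|T|+1$ input variables of degree one in the induced subgraph vanishes. Given such expansion, the pseudo-distribution on any bounded collection of constraints is built by repeatedly peeling off a constraint with at least $r$ private input variables and extending the current distribution to those private variables using $(r-1)$-wise independence so that the marginals remain uniform. The resulting distribution on all touched variables is a genuine probability measure, so the associated moment matrix up to the corresponding level is automatically positive semidefinite, giving a valid SDP pseudo-solution.

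The ``small modification'' of the generator refers to an additive random offset (or similar symmetrization) appended to each local output, whose purpose is to ensure that every input assignment consistent with a constraint is equally likely under the conditional distribution, so that uniform single-variable marginals are genuinely achievable. The main obstacle I expect is calibrating the expansion parameter against the target SDP level: handling Sherali--Adams$^+$ at a fixed constant level requires unique-neighbor expansion on sets polynomially larger than that level, and the small loss in this calibration between the combinatorial expansion threshold and the SDP consistency threshold is what forces the $-\delta$ shortfall in the exponent $r/2-\delta$ rather than a clean $r/2$.
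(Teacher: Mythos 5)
This lemma is not proved in the paper; it is quoted as background and attributed to~\cite{odonell2014goldreich}, \cite{v008a012}, so there is no proof in the paper to compare your attempt against. That said, your outline does recover the skeleton of the argument those papers use: a locally consistent pseudo-distribution coming from $(r-1)$-wise uniformity of each local function, a small-set expansion property of the random constraint hypergraph established by a first-moment bound, a peeling procedure that glues the local pseudo-distributions into a feasible Sherali-Adams${}^+$ solution, and an additive (XOR) randomization of each output as the ``small modification'' making the conditional support $(r-1)$-wise uniform regardless of the observed output value.

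There is, however, a genuine parameter error in your sketch. $(r-1)$-wise independence of $f_a\colon[q]^k\to[q]^l$ lets you extend a partial assignment through constraint $a$ precisely when at most $r-1$ of its $k$ input variables are already fixed, i.e.\ when $a$ has at least $k-r+1$ variables private to it within the subset being peeled. You instead peel on constraints with ``at least $r$ private input variables,'' i.e.\ at most $k-r$ fixed ones; this is only sufficient when $k-r\le r-1$, that is $k\le 2r-1$, and is too weak for larger $k$. Correspondingly, the unique-neighbor condition you need is that every small set $T$ of constraints has strictly more than $(k-r)|T|$ degree-one variables in the induced subgraph (equivalently, spans more than $(k-r/2)|T|$ variables), not $(r-1)|T|+1$; the two parameters coincide only when $k=2r-1$. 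The first-moment calculation with your parameter gives a threshold near $n^{(k-r+1)/2-\delta}$ rather than $n^{r/2-\delta}$ (for instance, with $k=3$, $r=3$ your condition demands pairwise-disjoint constraints, which already fails at $m=\Theta(n^{1/2})$), whereas with the corrected parameter $(k-r)|T|$ the first moment does vanish for $m=O(n^{r/2-\delta})$ on sets of size $n^{\Theta(\delta)}$, which is exactly the regime the lemma asserts.
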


From the above results, randomly generated Goldreich's generator is insecure when $m=O(n^{\frac{r}{2}+\delta})$
and secure against SDP when $m=O(n^{\frac{r}{2}-\delta})$ for any $\delta>0$.
Hence, a local function with large $r$ seems to be preferable.
However, for the binary alphabet, $r=k$ holds only when the predicate is affine.
In that case, one can efficiently find the input assignment by solving the system of linear equations.
Generally, the degree $d$ of polynomial representation on $\mathbb{F}_2$ of $(r-1)$-wise independent predicate is at most $k-r$ if $r\le k-2$.
If predicates in the generator have the degree $d$ in the polynomial representation, there is a linear attack when $m=O(n^d)$~\cite{mossel2006varepsilon}.
Hence, for the binary alphabet, large $r$ implies small $d$ which means there exists another attacker.
However, for non-binary cases, $r=k$ does not immediately imply existence of another attacker.
While $r=k$ implies that the constraint given by a local function is uniquely extendible constraint, it is known that combination of three types of
uniquely extendible constraints on quaternary alphabet can represent the three coloring problem on a graph~\cite{connamacher2004exact}.
This observation gives a motivation for considering non-binary generalization.

The followings are the main results of this paper 
on simple LP relaxation attack for Goldreich's generator
 using $u(n)\in \omega(1)\cap o(n)$ known input variables which will be defined in the next section.

\begin{theorem}\label{thm:main0}
Assume that all local functions are $(r-1)$-wise independent for some $r\ge 3$.
If $m=\mu \frac{n^{r-1}}{u(n)^{r-2}}$ for arbitrary constant $\mu < \mu_{\mathrm{c}}(k,r)$,
 then a linear programming relaxation using $u(n)$ known input variables cannot
 fix $\frac1{r-2} u(n)$ input variables of the randomly generated Goldreich's generator with probability exponentially close to 1 with respect to $u(n)$.
\end{theorem}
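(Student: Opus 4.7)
The plan is to reduce Theorem~\ref{thm:main0} to a sub-critical peeling analysis on the random bipartite graph of the generator, via a ``stopping set'' characterization of the LP's unfixable variables.

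\emph{Structural lemma.} Let $K \subseteq [n]$ denote the $u(n)$ known indices and iterate $K^{0}=K$,
\begin{equation*}
K^{t+1} \; = \; K^{t} \cup \bigcup\bigl\{\mathrm{inputs}(a) : |\mathrm{inputs}(a) \cap K^{t}| \geq r-1\bigr\},
\end{equation*}
to the limit $K^{\infty}$. Set $S := [n] \setminus K^{\infty}$; by construction every function $a$ touching $S$ satisfies $|\mathrm{inputs}(a) \setminus S| \leq r-2$. I would show that every variable of $S$ is unfixed by the LP by constructing an explicit alternative feasible solution in the local marginal polytope: for each $a$ touching $S$, take $\mu_{a}$ to be the uniform distribution on $\{\bm{x} \in f_{a}^{-1}(\bm{y}_{a}) : x_{i}=x_{i}^{*}\ \text{for all } i \in \mathrm{inputs}(a) \setminus S\}$. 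Applying $(r-1)$-wise independence to the size-$(r-1)$ set $\{j\} \cup (\mathrm{inputs}(a) \setminus S)$ for each $j \in \mathrm{inputs}(a) \cap S$ shows that the marginal of $\mu_{a}$ on $X_{j}$ is uniform on $[q]$, hence not a point mass; on $K^{\infty}$-variables the marginal is the planted point mass. Cross-function consistency reduces to matching either uniform or planted values, which is automatic, so $(\mu_{a})_{a}$ is LP-feasible and witnesses non-fixability of every variable in $S$.

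\emph{Random-graph analysis.} Because each of the $m = \mu n^{r-1}/u(n)^{r-2}$ functions samples its $k$-tuple of inputs independently and uniformly, the neighborhood of $K$ is a disjoint union of trees down to logarithmic depth with probability $1 - e^{-\Omega(u(n))}$; this follows from a first-moment bound on small cycles attached to the seed, using $u(n) = o(n)$. On such a tree the peeling becomes a multi-type Galton--Watson process whose offspring distribution is determined by which of the $k$ neighbours of a function fall into the currently peeled set. A standard analysis of this cascade (as used for random hypergraph cores and LDPC erasure decoding) produces a one-dimensional fixed-point equation whose subcritical threshold is
\begin{equation*}
\mu_{\mathrm{c}}(k,r) \; = \; \binom{k}{r}^{-1}\,\frac{(r-2)^{r-2}}{r(r-1)^{r-1}},
\end{equation*}
with the prefactor $\binom{k}{r}^{-1}$ appearing through the identity $(k-r+1)\binom{k}{r-1} = r\binom{k}{r}$ that converts new-function activations into new-variable peelings. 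For $\mu < \mu_{\mathrm{c}}(k,r)$ the cascade is subcritical, $|K^{\infty}| = O(u(n))$ in expectation, and $|S| \geq u(n)/(r-2)$ follows immediately since $u(n) = o(n)$. A bounded-differences (Azuma) inequality over the independent choice of function neighbourhoods upgrades this to the stated $1 - e^{-\Omega(u(n))}$ probability bound.

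\emph{Main obstacle.} The delicate part is the structural lemma, specifically ensuring uniform single-variable marginals on $S$. Uniformity is essential because it is the only single-variable distribution on $[q]$ that is automatically consistent across different functions without additional structural input on the alphabet; the stopping-set condition $|\mathrm{inputs}(a) \setminus S| \leq r-2$ is tight for producing such uniform marginals from $(r-1)$-wise independence, since one extra outside input would require $r$-wise independence, which is not assumed. On the random-graph side, the subcritical peeling analysis is standard but must respect the specific scaling $m = \mu n^{r-1}/u(n)^{r-2}$ in order to recover the threshold $\mu_{\mathrm{c}}(k,r)$ rather than a related hypergraph-core constant.
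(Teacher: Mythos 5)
Your structural lemma is exactly the paper's Lemma~5 (cited to Feldman): the set $S$ you construct is a $(k-r+2)$-stopping set, the uniform-conditional local beliefs on $f_a^{-1}(\bm{y}_a)$ are precisely the feasible point constructed there, and the use of $(r-1)$-wise independence on the size-$(r-1)$ set $\{j\}\cup(\mathrm{inputs}(a)\setminus S)$ to force the uniform single-variable marginal matches the paper's argument. This part is fine.

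The random-graph part, however, takes a different route from the paper and has a genuine gap. The paper deliberately avoids the Wormald / fluid-limit / branching-process machinery from hypergraph cores and LDPC decoding, explicitly noting that those methods assume $m=\Theta(n)$, whereas here $m=\mu n^{r-1}/u(n)^{r-2}$ is superlinear whenever $u(n)=o(n)$. It instead tracks the Markov chain on $E_1^{k-r+1}(t)$ (the number of edges hanging off constraints of degree at most $k-r+1$), dominates it by a simplified chain, and computes moment generating functions directly, yielding the criticality condition $\mu\,r\binom{k}{r}(1+\tau)^{r-1}<\tau$ for some $\tau>0$ without any appeal to tree structure. Your proposal instead asserts that the peeled neighbourhood of $K$ is tree-like with probability $1-e^{-\Omega(u(n))}$ and then reads off a Galton--Watson subcritical threshold. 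Two problems: \textbf{(i)} Tree-likeness fails for large $u(n)$. The expected number of constraints with at least $r$ inputs inside the $O(u(n))$-vertex peeled region is of order $m\binom{k}{r}(u(n)/n)^{r}=\Theta(u(n)^2/n)$, which is $\omega(1)$ whenever $u(n)=\omega(\sqrt{n})$. The theorem must hold for all $u(n)\in\omega(1)\cap o(n)$, so your tree-likeness premise is simply false in a large part of the parameter range (and this range is explicitly flagged in the paper as the regime where the LP attack beats $n^{r/2}$). The paper's MGF bound absorbs this deviation into an $O(\max\{1,u(n)^2/n\})$ correction inside the exponent rather than requiring tree-likeness outright. \textbf{(ii)} Even in the sparse regime, the peeling cascade here is not a Galton--Watson process with a fixed offspring law: a constraint activates when $r-1$ of its inputs fall into the \emph{cumulative} peeled set of size $(1+\tau)u(n)$, so the activation rate grows like $(1+\tau)^{r-1}$ along the trajectory. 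The correct condition is that the cumulative budget $\mu\,r\binom{k}{r}(1+\tau)^{r-1}u(n)-\tau u(n)$ dips below zero for some $\tau\in(0,1/(r-2))$, which is what produces $\mu_{\mathrm{c}}(k,r)$ after optimizing over $\tau$; a constant-offspring Galton--Watson fixed-point computation would yield a different (larger) threshold. You would need to replace the branching-process language with a self-contained drift/martingale argument on the cumulative activation count, and handle the non-tree-like regime, essentially reproducing the paper's Markov chain analysis.
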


\begin{theorem}\label{thm:main1}
Assume that all local functions have MDS inverse image of the dimension $r-1$ for some $r\ge 3$.
If $m=\mu \frac{n^{r-1}}{u(n)^{r-2}}$ for any constant $\mu > \mu_{\mathrm{c}}(k,r)$,
 then a linear programming relaxation using $u(n)$ known input variables can fix
 linearly many number of input variables of the randomly generated Goldreich's generator with probability exponentially close to 1 with respect to $u(n)$.
\end{theorem}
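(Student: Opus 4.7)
The plan is a two-step argument. The first step is a reduction from the LP relaxation to a \emph{peeling algorithm} on the bipartite graph underlying the generator: initialize the known set $K$ with the $u(n)$ revealed variables; then iteratively pick any constraint $a$ at least $r-1$ of whose variables lie in $K$ and adjoin the remaining $k - r + 1$ variables of $a$ to $K$. The MDS-of-dimension-$(r-1)$ hypothesis on $f_a$ guarantees that any $r-1$ coordinates of the code $f_a^{-1}(y_a)$ form an information set, so those remaining coordinates are uniquely determined by the output $y_a$ and the $r-1$ known inputs. Every variable added to $K$ by this procedure is then pinned to a Dirac pseudo-marginal by any feasible LP solution on the local marginal polytope: the marginalization constraints between the one-variable marginals and the constraint marginal at $a$ force the latter to concentrate on the unique codeword compatible with the $r-1$ already-pinned inputs, which in turn collapses the remaining one-variable marginals to Dirac measures on the corresponding values.

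The second step is to analyze the peeling process on the random bipartite graph, where each of the $m = \mu n^{r-1}/u(n)^{r-2}$ constraints is drawn independently and uniformly among ordered $k$-tuples of distinct variables. In this scaling the constraint density $m/n$ is $\omega(1)$ while the seed $u(n)$ is $o(n)$, so the analysis splits into a \emph{bootstrap phase}, during which the known-set size $U$ grows from $u(n)$ to $\Theta(n)$, and a \emph{completion phase}, in which peeling on the remaining (now sparse) residual graph finishes in the usual way once $U = \Theta(n)$. The bootstrap phase is the delicate one: the threshold $\mu_c(k,r)$ is identified via a first-moment bound on \emph{blocking stopping sets}, i.e., sets $S$ disjoint from the initial known variables such that every constraint $a$ satisfies $|a \cap S| = 0$ or $|a \cap S| \ge k - r + 2$. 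Substituting the probability that a uniformly random $k$-subset is compatible with such an $S$ and optimizing over the relative size of $S$ reduces the question to a one-variable extremal problem whose critical value equals $(r-2)^{r-2}/(r-1)^{r-1}$, achieved at the interior optimizer of $x^{r-2}(1-x)$; combining with the factor $\binom{k}{r}^{-1}$ from the combinatorics of the uniform hyperedge distribution and the additional $1/r$ arising from the scaling gives the threshold $\mu_c(k,r)$ stated in the theorem.

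When $\mu > \mu_c$, the first-moment bound shows that with probability $1 - \exp(-\Omega(u(n)))$ no blocking stopping set exists, so the peeling trajectory cannot be halted inside the bootstrap regime. The trajectory of $U$ is then controlled by Wormald's differential-equation method, or equivalently by Azuma--Hoeffding applied to an edge-exposure martingale, which gives concentration around a deterministic fluid approximation that strictly increases through the bootstrap phase and terminates the completion phase with $U = \Omega(n)$.

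The main technical obstacle is making the bootstrap-phase analysis rigorous in the nonstandard scaling $u(n) = \omega(1) \cap o(n)$ with $m/n = \omega(1)$: classical random-hypergraph peeling results assume $m = \Theta(n)$ from the start, so one cannot simply cite them. Particular care is needed to control the first-moment count of blocking stopping sets uniformly over all possible sizes, and to glue the bootstrap-phase estimate to the sparse completion-phase analysis so that the total failure probability remains $\exp(-\Omega(u(n)))$. The MDS-of-dimension-$(r-1)$ hypothesis is what makes the threshold sharp rather than merely an upper bound: weaker $(r-1)$-wise independence suffices only for the converse impossibility in Theorem~\ref{thm:main0}, not for the present propagation step.
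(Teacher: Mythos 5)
Your step (1), the reduction of LP zero-optimal solutions to $(k-r+2)$-stopping sets in $\{u(n)+1,\dotsc,n\}$ via the MDS hypothesis, is exactly the paper's Lemma~\ref{lem:lpss}, and your verbal description of how marginalization consistency pins the pseudo-marginals is a correct reading of it. Your high-level decomposition of the peeling analysis into a bootstrap phase and a completion phase also matches the structure of the paper's proof of Theorem~\ref{thm:MU}.

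The gap is in the central technical step of the bootstrap phase. You propose to show non-existence of blocking stopping sets by a first-moment bound over sets $S$, optimizing over $|S|$. But at the threshold the relevant stopping sets $V'\subseteq\{u(n)+1,\dotsc,n\}$ have $|V'|=n-(1+\tau)u(n)$ with $\tau=O(1)$, so $S=\{u(n)+1,\dotsc,n\}\setminus V'$ has size $\tau u(n)=o(n)$. For such an $S$ the probability that a uniformly random $k$-tuple is compatible is $1-\Theta(((1+\tau)u(n)/n)^{r-1})$, so with $m=\mu n^{r-1}/u(n)^{r-2}$ the per-instance probability is $\exp\{-\mu\binom{k}{r-1}(1+\tau)^{r-1}u(n)(1+o(1))\}$. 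Meanwhile the number of candidate sets is $\binom{n-u(n)}{\tau u(n)}=\exp\{\tau u(n)\log(n/u(n))(1+o(1))\}$. Since $u(n)=o(n)$, the entropy factor $\log(n/u(n))\to\infty$ and overwhelms the $\Theta(u(n))$ exponent in the probability, so the expected number of blocking stopping sets of size $n-(1+\tau)u(n)$ \emph{diverges} for every fixed $\tau>0$. The first moment thus cannot yield the non-existence you need; the count is dominated by a huge mass of non-maximal, heavily overlapping stopping sets whose union is a single object. This is precisely why the paper does not use a stopping-set first moment here (its only first-moment argument, Lemma~\ref{lem:small}, is in the different scaling $m=\mu n\log n$).

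What the paper does instead, and what you would need to supply, is a \emph{trajectory} analysis of the peeling Markov chain: track $E_1^{k-r+1}(t)$, the number of edges into constraints of degree at most $k-r+1$, couple it with a dominating chain $\underline{E}_1^{k-r+1}(t)$ for which the moment generating function can be computed exactly (Theorems~\ref{thm:LG} and~\ref{thm:LG2}), and take a Chernoff bound at each step $t\le\tau u(n)$. This replaces the union bound over all subsets $S$ by a union bound over only $O(u(n))$ time steps, which is where the exponential-in-$u(n)$ failure probability comes from, and it is also what produces the $\sup_{\tau>0}\inf_{\lambda<0}\varphi_{k,r}$ expression and hence the exact constant $\mu_{\mathrm c}(k,r)$. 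You correctly flag that Wormald's theorem and off-the-shelf sparse-hypergraph peeling do not apply because $m/n=\omega(1)$, but the MGF-based step analysis is the missing substitute; an edge-exposure Azuma bound alone would not recover the sharp threshold, because it does not separate the intermediate case from the supercritical case without first controlling the drift of the chain. Once the bootstrap phase is handled this way, your completion phase (a giant component in the residual graph once $C_{k-r+2}$ is supercritical) matches the paper's use of Lemma~\ref{lem:2large} and is fine.
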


In the above results, $u(n)\in\omega(1)\cap o(n)$ variables are assumed to be known.
This assumption is justified when $u(n)=O(\log n)$ since one can 
try to apply the LP for all of the $q^{u(n)}$ assignments in polynomial time in $n$.
If linearly many variables are fixed without contradiction and if $m$ is superlinear,
 one can distinguish the output of Goldreich's generator from uniform random variable
since with high probability, linearly many variables cannot be fixed without contradiction when uniform random variables are assigned to
the output variables of local functions.
When $u(n)=O(\log n)$, from Theorem~\ref{thm:main0}, the LP relaxation attack fails with probability polynomially close to 1 with respect to $n$.
Hence, it is not sufficient to claim that the generator is secure in the strong sense against the LP relaxation attack~\cite{books/cu/Goldreich2001}.
When $u(n)=O(\log n)$, the order $\frac{n^{r-1}}{u(n)^{r-2}}$ of $m$ in the above results is much larger than $n^{\frac{r}{2}}$,
which means that the LP relaxation attack is suboptimal.
By tightening the polytope, one can obtain the currently optimal order $n^{\frac{r}{2}}$.
This problem will be discussed in Section~\ref{sec:condis}.
When $u(n)=\omega(\sqrt{n})$, Goldreich's generator is no longer secure even when $m=o(n^{r/2})$ since $\frac{n^{r-1}}{u(n)^{r-2}}=o(n^{r/2})$.
In this case, in our knowledge, there is no attack inverting Goldreich's generator asymptotically better than the simple LP relaxation attack in this paper.

\section{Linear programmings, peeling algorithms and stopping sets}\label{sec:LPMP}
In this section, the LP relaxation for $k$-CSP which is the algorithm discussed in this paper is defined.
Furthermore, it is shown that the LP relaxation is related to combinatorial structures called stopping sets.
A $k$-CSP can be represented as the integer programming (IP)
\begin{equation}\label{eq:oIP}
\min_{\bm{x}\in [q]^n}: \sum_{a=1}^m \mathbb{I}\left\{f_a(\bm{x}^{(a)}) \ne \bm{y}_a\right\}
\end{equation}
where $\bm{x}^{(a)}:= (x_i)_{i\in \partial(a)}$ for $\partial(a):=\{i^{(a)}_1,\dotsc,i^{(a)}_k\}$
and where $\mathbb{I}\{\cdot\}$ is the indicator function.
For applying the LP relaxation to the IP, we first choose $u(n)\in \omega(1)\cap o(n)$ variables and generate $q^{u(n)}$ sub-IPs for each assignment on the $u(n)$ variables
since the LP relaxation for~\eqref{eq:oIP} has a trivial useless solution which will be mentioned in this section.
Without loss of generality, we can assume that $x_1,\dotsc,x_{u(n)}$ are fixed to some assignment since we consider uniform random construction of the $k$-CSP.
Then, for each assignment $(z_1,\dotsc,z_{u(n)})\in [q]^{u(n)}$, we consider the sub-IP
\begin{equation}\label{eq:IP}
\min_{\substack{\bm{x}\in [q]^n\\ x_i=z_i \text{ for } i=1,2,\dotsc,u(n)}}: \sum_{a=1}^m \mathbb{I}\left\{f_a(\bm{x}^{(a)}) \ne \bm{y}_a\right\}.
\end{equation}
The original IP~\eqref{eq:oIP} can be solved by taking minimum among all of the $q^{u(n)}$ sub-IPs.
Note that when $u(n)=O(\log n)$, there are polynomially many sub-IPs.
The LP relaxation is applied for the sub-IPs~\eqref{eq:IP} rather than the original IP~\eqref{eq:oIP}.
First, the marginal polytope, which gives a tight LP relaxations, is defined as follows.
\begin{definition}[Marginal polytope]
Let $\mathrm{DIST}_n:=\{(p(\bm{x}))_{\bm{x}\in [q]^n}\mid p(\bm{x})\ge 0,\,\forall\bm{x}\in [q]^n, \sum_{\bm{x}}p(\bm{x})=1\}$.
Then, the marginal polytope is defined as
\begin{align*}
\mathrm{MARG}&:=\Bigl\{((p_i(x_i))_{i\in [n], x_i\in [q]},(p_{(a)}(\bm{x}^{(a)}))_{a\in [m], \bm{x}^{(a)}\in [q]^{k}})\mid
 \exists p\in \mathrm{DIST}_n, p_i(x_i)=\sum_{\bm{x}\setminus x_i}p(\bm{x}),\,\forall i\in [n], \forall x_i\in [q],\\
&\qquad p_{(a)}(\bm{x}^{(a)})=\sum_{\bm{x}\setminus\bm{x}^{(a)}} p(\bm{x}),\,\forall a\in [m], \forall \bm{x}^{(a)}\in [q]^{k}\Bigr\}.
\end{align*}
\end{definition}
Using the marginal polytope, one obtains the tight LP relaxation for~\eqref{eq:IP}
\begin{equation*}
\min_{\substack{((p_i),(p_{(a)}))\in\mathrm{MARG},\\ p_i(z_i)=1 \text{ for } i=1,2,\dotsc,u(n)}}: 
\sum_{a=1}^m \mathbb{E}_{p_{(a)}}\left[f_a(\bm{X}^{(a)}) \ne \bm{y}_a\right].
\end{equation*}
Although the above LP relaxation using the marginal polytope is tight, the marginal polytope uses exponentially many variables and inequalities.
Hence, we consider loose but more efficient LP relaxation.

\begin{definition}[Local marginal polytope]
The local marginal polytope is defined as
\begin{align*}
\mathrm{LOCAL}:=\Bigl\{((b_i(x_i))_{i\in [n], x_i\in [q]}, (b_{(a)}(\bm{x}^{(a)}))_{a\in [m], \bm{x}^{(a)}\in [q]^{k}})\mid\,
& b_{(a)}\in \mathrm{DIST}_{k},\,\forall a\in [m],\\
& b_i(x_i)=\sum_{\bm{x}^{(a)}\setminus x_i} b_{(a)}(\bm{x}^{(a)}),\,\forall i\in[n],a\in[m] \text{ satisfying } i\in\partial(a)\Bigr\}.
\end{align*}
\end{definition}
Obviously, it holds $\mathrm{MARG}\subseteq\mathrm{LOCAL}$ from the definition.
If the bipartite graph representing the $k$-CSP problem is tree, it holds $\mathrm{MARG}=\mathrm{LOCAL}$~\cite{wainwright2008graphical}.
However, generally, the inclusion relation is strict.
The LP relaxation using the local marginal polytope for~\eqref{eq:IP} is obtained as
\begin{equation}\label{eq:LP}
\min_{\substack{((b_i),(b_{(a)}))\in\mathrm{LOCAL},\\ b_i(z_i)=1,\forall i=1,2,\dotsc,u(n)}}: 
\sum_{a=1}^m \mathbb{E}_{p_{(a)}}\left[f_a(\bm{X}^{(a)}) \ne \bm{y}_a\right].
\end{equation}
Note that if we consider the LP relaxation using the local marginal polytope for~\eqref{eq:oIP},
\begin{equation}\label{eq:dLP}
\min_{((b_i),(b_{(a)}))\in\mathrm{LOCAL}}: 
\sum_{a=1}^m \mathbb{E}_{p_{(a)}}\left[f_a(\bm{X}^{(a)}) \ne \bm{y}_a\right]
\end{equation}
there always exists a trivial solution having zero objective value when all local functions are 1-wise independent,
which is $b_i(x_i)=1/q$ for all $i\in [n]$ and $x_i\in [q]$, and $b_{(a)}(\bm{x}^{(a)})=\mathbb{I}\{f_a(\bm{x}^{(a)})= \bm{y}_a\}/|f_a^{-1}(\bm{y}_a)|$ for all $a\in [m]$
and all $\bm{x}^{(a)}\in [q]^{k}$.
Hence, we consider the LP relaxation for~\eqref{eq:IP} rather than that for~\eqref{eq:oIP}.
If all marginals in $\mathrm{LOCAL}$ are deterministic, we call it an integral assignment.
A subset $\{i\in[n]\mid  \exists x_i\in [q], b_i(x_i)=1\}\subseteq [n]$ is called
 an integral part of $((b_i)_{i\in[n]}, (b_{(a)})_{a\in[m]})\in\mathrm{LOCAL}$.
\begin{remark}
The LP using $\mathrm{LOCAL}$ can be regarded as the minimization of the Bethe free energy at zero-temperature in statistical physics and 
is sometimes called the basic LP in computer science~\cite{thapper2012power}.
On the other hand, there is a message passing algorithm called belief propagation (BP) which tries to minimize the Bethe free energy on non-zero temperature~\cite{yedidia2005constructing}.
Recently, spectral algorithm using non-backtracking matrix is proposed in~\cite{Krzakala25112013}, which can be regarded as linearization of BP
on trivial fixed point.
These three algorithms can be understood by the Bethe approximation.
\end{remark}

In this paper, we consider the limit of solvability of the randomly generated planted $k$-CSP by the LP relaxation~\eqref{eq:LP}.
If there exists an optimal solution for~\eqref{eq:LP} with zero objective value, we call it a zero-optimal solution.
If the $u(n)$ fixed assignment cannot be extended to a solution of the planted $k$-CSP problem~\eqref{eq:oIP},
there does not exist an integral zero-optimal solution for~\eqref{eq:LP}.
If the $u(n)$ fixed assignment can be extended to a solution of~\eqref{eq:oIP},
there exist an integral zero-optimal solution although there also exist non-integral zero-optimal solutions in general.
In the rest of this paper, we assume that the $u(n)$ variables are fixed to be the values of the planted assignment $\bm{x}^*\in[q]^n$,
and consider whether the LP~\eqref{eq:LP} has non-integral zero-optimal solutions.

It is well-known that for $k$-CSP problems including zero-optimal solutions, the LP~\eqref{eq:LP} using the local marginal polytope
is strongly related to a simple message-passing algorithm.
\begin{definition}[Peeling algorithm for a bipartite graph]
For $d\ge 2$, the $d$-peeling algorithm for a bipartite graph starting from a set $V\subseteq[n]$ of variables is defined as follows.
First, all variable vertices not in $V$ are removed from the bipartite graph.
Then, if there is a constraint vertex of the degree at most $d-1$,
then the constraint vertex and all of the at most $d-1$ variable vertices connected to the constraint vertex
are removed from the bipartite graph.
This process is iterated until there is no constraint vertex of the degree at most $d-1$. 
\end{definition}
The $d$-peeling algorithm naturally appears in many problems,
 e.g., the decoding of low-density parity-check codes~\cite{Luby:1997:PLC:258533.258573},
 the satisfiability and clustering phase transition of random $k$-XORSAT~\cite{ibrahimi2011set} and cuckoo hashing~\cite{dietzfelbinger2010tight}.
The peeling algorithm starting from $V$ stops if and only if the current set variables forms a structure called a stopping set.
This type of peeling algorithm on the same random graph ensemble was considered for $k=3$ and $d=2$ in~\cite{coja2012propagation}, \cite{watanabe2013mp}.
\begin{definition}[Stopping set~{\cite{di2002finite}}]
For $d\ge 2$, a subset $V'\subseteq [n]$ is called a $d$-stopping set if $|\partial (a)\cap V'|\notin \{1,2,\dotsc,d-1\}$ for all constraint $a\in [m]$.
\end{definition}
It is obvious that the $d$-peeling algorithm starting from $V$
 removes all variables if and only if there does not exist non-empty $d$-stopping set included by $V$.
The concept of $d$-stopping set is useful for analyzing the LP~\eqref{eq:LP}.

\begin{lemma}[{\cite{feldman2005using}}]\label{lem:sslp}
Assume that all local functions are $(r-1)$-wise independent.
If there is a $(k-r+2)$-stopping set $V'\subseteq \{u(n)+1,u(n)+2,\dotsc,n\}$,
then the LP~\eqref{eq:LP} using the local marginal polytope has a zero-optimal solution whose integral part is $[n]\setminus V'$.
\end{lemma}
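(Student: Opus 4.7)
The plan is to exhibit an explicit feasible point of~\eqref{eq:LP} with zero objective value whose integral part is exactly $[n]\setminus V'$. I would set $b_i(x_i):=\mathbb{I}\{x_i=x_i^*\}$ for $i\in[n]\setminus V'$ and $b_i(x_i):=1/q$ for $i\in V'$, where $\bm{x}^*\in[q]^n$ denotes the planted assignment; the clamps $b_i(z_i)=1$ for $i\le u(n)$ then hold automatically since $V'\subseteq\{u(n)+1,\dotsc,n\}$ and we have assumed $z_i=x_i^*$. For each $a\in[m]$, writing $T_a:=\partial(a)\setminus V'$, I would let $b_{(a)}$ be the uniform distribution on $\{\bm{x}^{(a)}\in f_a^{-1}(\bm{y}_a):x_j=x_j^*\text{ for all }j\in T_a\}$, a set which is non-empty since it contains $\bm{x}^{*(a)}$.

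The zero-objective property is immediate because every $b_{(a)}$ is supported on $f_a^{-1}(\bm{y}_a)$, so the structural content is in verifying marginal consistency. The stopping-set hypothesis enters through the fact that $V'$ being a $(k-r+2)$-stopping set forces $|\partial(a)\cap V'|\in\{0\}\cup\{k-r+2,\dotsc,k\}$, equivalently $|T_a|\in\{0,1,\dotsc,r-2\}\cup\{k\}$. In the case $|T_a|=k$ the measure $b_{(a)}$ degenerates to a Dirac mass at $\bm{x}^{*(a)}$, so all its single-variable marginals trivially match the integral $b_i$'s. In the case $|T_a|\le r-2$, consistency with $b_j$ for $j\in T_a$ is built into the support of $b_{(a)}$; the only genuinely non-trivial check is that for each $i\in\partial(a)\cap V'$ the $x_i$-marginal of $b_{(a)}$ equals $1/q$.

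This last check is where $(r-1)$-wise independence does the actual work, and I expect it to be the main obstacle. Since $|T_a\cup\{i\}|\le r-1$, the defining property of $(r-1)$-wise independence says that under a uniform choice of $\bm{x}\in[q]^k$ the vector $((x_j)_{j\in T_a\cup\{i\}},f_a(\bm{x}))$ is uniform on $[q]^{|T_a|+1}\times[q]^l$. Conditioning on the joint event $\{(x_j)_{j\in T_a}=(x_j^*)_{j\in T_a}\}\cap\{f_a(\bm{x})=\bm{y}_a\}$, which has strictly positive probability $q^{-|T_a|-l}$, therefore leaves $x_i$ uniform on $[q]$; since this conditional distribution is by construction $b_{(a)}$, its $x_i$-marginal is $1/q$ as needed. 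The integral part of the resulting point is then $[n]\setminus V'$ because $b_i(x_i^*)=1$ for $i\notin V'$, while for $i\in V'$ the uniform $b_i$ never equals $1$ as $q\ge 2$. Everything outside the conditional-marginal step is routine definition-chasing.
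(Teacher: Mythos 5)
Your construction is the same one the paper uses, and the $(r-1)$-wise-independence argument is the right justification for the marginal-consistency check at $i\in\partial(a)\cap V'$. In fact your write-up is slightly more careful than the paper's one-line proof, which normalizes $b_{(a)}$ by $1/|f_a^{-1}(\bm{y}_a)|$ (correct only when $\partial(a)\subseteq V'$, i.e.\ $T_a=\emptyset$; the right constant is $1/q^{k-l-|T_a|}$, which your ``uniform on the clamped fiber'' phrasing gets right) and simply asserts membership in $\mathrm{LOCAL}$ without spelling out the conditioning argument.
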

\begin{proof}
let $V'\subseteq \{u(n)+1,u(n)+2,\dotsc,n\}$ be a $(k-r+2)$-stopping set.
Then, $((b_i),(b_{(a)}))$ defined by
$b_i(x_i)=1/q$ for any $i\in V'$, $x_i\in[q]$,
$b_i(x^*_i)=1$ for any $i\notin V'$ and
\begin{align*}
b_{(a)}(\bm{x}^{(a)})&=\frac1{|f_a^{-1}(\bm{y}_a)|},\hspace{2em} \forall \bm{x}^{(a)}\in [q]^{k} \quad\text{ satisfying }\quad
x_i= x^*_i \quad \forall i\notin \partial (a)\cap V', f_a(\bm{x}^{(a)})=\bm{y}_a
\end{align*}
is an element of $\mathrm{LOCAL}$ from the assumption of $(r-1)$-wise independence of $f_a$.
The above $((b_i),(b_{(a)}))$ has a zero objective value.
\end{proof}

For local functions having MDS inverse image, the converse of Lemma~\ref{lem:sslp} also holds.

\begin{lemma}[{\cite{feldman2005using}}]
\label{lem:lpss}
Assume that all local functions have MDS inverse of the dimension $r-1$.
If the LP~\eqref{eq:LP} using the local marginal polytope has a zero-optimal solution whose integral part is $V'\subseteq [n]$, then
$[n]\setminus V'$ is $(k-r+2)$-stopping set.
\end{lemma}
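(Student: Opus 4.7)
The plan is to contrapose the stopping-set condition directly. Suppose we have a zero-optimal feasible solution $((b_i),(b_{(a)}))$ of the LP~\eqref{eq:LP} with integral part $V'$, and fix an arbitrary constraint $a\in[m]$. Let $s:=|\partial(a)\cap V'|$. I want to show $s\in\{0,1,\dotsc,r-2\}\cup\{k\}$, which is exactly the condition that $[n]\setminus V'$ is a $(k-r+2)$-stopping set.

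First I would record two consequences of zero-optimality and local consistency. Because the objective is $\sum_a \mathbb{E}_{b_{(a)}}\mathbb{I}\{f_a(\bm{X}^{(a)})\ne \bm{y}_a\}$ and equals $0$, each $b_{(a)}$ must be supported on $f_a^{-1}(\bm{y}_a)$. Because $((b_i),(b_{(a)}))\in\mathrm{LOCAL}$ and each integral $b_i$ (for $i\in\partial(a)\cap V'$) puts mass $1$ on the planted value $x_i^*$, the marginal constraint forces $b_{(a)}$ to be supported on configurations $\bm{x}^{(a)}$ whose $s$ coordinates indexed by $\partial(a)\cap V'$ agree with the corresponding $x_i^*$.

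Now comes the key step, where the MDS hypothesis is used. Assume for contradiction that $r-1\le s\le k-1$. The support of $b_{(a)}$ sits inside the set of codewords of the MDS code $f_a^{-1}(\bm{y}_a)$ that agree with a fixed pattern on $s\ge r-1$ coordinates. Since $f_a^{-1}(\bm{y}_a)$ is MDS of dimension $r-1$, any $r-1$ coordinate values of a codeword determine the codeword uniquely; so this support consists of at most one codeword, and since $b_{(a)}$ is a probability distribution and nonzero on its support, it is in fact a Dirac measure on exactly one codeword. Marginalizing this Dirac measure, every $b_i$ with $i\in\partial(a)$ is integral. Hence all $k$ variables of $\partial(a)$ lie in $V'$, giving $s=k$, contradicting $s\le k-1$.

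Thus for every $a$, $s\in\{0,\dotsc,r-2\}\cup\{k\}$, equivalently $|\partial(a)\cap([n]\setminus V')|\in\{0\}\cup\{k-r+2,\dotsc,k\}$, so $[n]\setminus V'$ is a $(k-r+2)$-stopping set. The main subtlety I expect is only cosmetic, namely being careful that the local consistency condition genuinely forces the joint $b_{(a)}$ to respect the integrality of its connected $b_i$'s (which follows from $b_i(x_i^*)=1$ and nonnegativity of $b_{(a)}$), and that one checks the MDS distance-uniqueness argument with $s$ coordinates rather than exactly $r-1$ (one may simply restrict to any $r-1$-subset of the fixed coordinates).
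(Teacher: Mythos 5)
Your proof is correct and takes essentially the same approach as the paper's own (very terse) proof: zero-optimality forces each $b_{(a)}$ to be supported on $f_a^{-1}(\bm{y}_a)$, and the MDS distance property then forces $|\partial(a)\cap V'|\ge r-1$ to imply $|\partial(a)\cap V'|=k$; you have simply spelled out the Dirac-measure step that the paper leaves implicit. One minor cosmetic slip: the integral coordinates in $V'$ beyond the $u(n)$ clamped ones need not equal the planted values $x_i^*$, only some deterministic values, but your argument never uses that they are planted, so nothing is affected.
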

\begin{proof}
Let $((b_i),(b_{(a)}))\in\mathrm{LOCAL}$ be one of the zero-optimal solution (not necessarily integral) for the LP.
For any $a\in [m]$, $b_{(a)}(\bm{x}^{(a)})>0$ only for $\bm{x}^{(a)}$ satisfying the constraint $f_a$
since $((b_i),(b_{(a)}))$ has a zero objective value.
Hence, if $|\partial (a)\cap V'|\ge r-1$, then $|\partial (a)\cap V'|$ must be $k$.
\end{proof}

From Lemmas~\ref{lem:sslp} and \ref{lem:lpss}, it is sufficient to analyze $(k-r+2)$-stopping sets
for showing Theorems~\ref{thm:main0} and \ref{thm:main1}.
A non-empty $d$-stopping set of size smaller than $\alpha n$ is called $\alpha$-small $d$-stopping set for $\alpha\in(0,1)$.
A non-empty $d$-stopping set which is not $\alpha$-small $d$-stopping set is called $\alpha$-large $d$-stopping set.

\begin{lemma}[Threshold for small stopping set]\label{lem:small}
For any $\alpha\in(0,1)$,
the randomly generated bipartite graph including $m=\mu n\log n$ constraint does not have $\alpha$-small $d$-stopping set
with probability $1-O(n^{1-\mu k+\epsilon})$
for any
 $\mu>1/k$
and $\epsilon\in(0,\mu k-1)$.
\end{lemma}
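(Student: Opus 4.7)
The plan is to use a first-moment union bound over candidate stopping sets, parametrized by size $s$. For a fixed $V'\subseteq[n]$ with $|V'|=s$, the $m$ constraints are chosen independently, so
\begin{equation*}
\Pr[V'\text{ is }d\text{-stopping}]=(1-p_s)^m,\qquad p_s := \Pr_a\!\left[|\partial(a)\cap V'|\in\{1,\ldots,d-1\}\right],
\end{equation*}
and summing over all $\binom{n}{s}$ candidates and $s=1,\ldots,\lfloor\alpha n\rfloor$ bounds the failure probability by $\sum_{s}\binom{n}{s}(1-p_s)^m$.

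I would then lower bound $p_s$ by retaining only its $j=1$ summand, giving $p_s \geq \binom{s}{1}\binom{n-s}{k-1}/\binom{n}{k} = (ks/n)\prod_{i=1}^{k-1}(n-s-i+1)/(n-i)$, which for $s\leq\alpha n$ is at least $c_\alpha \cdot ks/n$ for a positive constant $c_\alpha=(1-\alpha)^{k-1}(1-o(1))$. Combining with $\binom{n}{s}\leq(en/s)^s$ and $(1-x)^m\leq e^{-mx}$, each summand is at most $\exp\bigl(s\log(en/s)-\mu c_\alpha k s\log n\bigr)$. The $s=1$ term is the dominant one: here $p_1=k/n$ exactly, so $n(1-k/n)^{\mu n\log n}=n^{1-\mu k+o(1)}$, which already matches the claimed bound. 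For $s\geq 2$ in the sub-linear regime $s=o(n)$, the constant $c_\alpha$ can be taken arbitrarily close to $1$, so each term is at most $n^{s(1-\mu k)+o(s)}$ and the series is geometrically decreasing in $s$, summing to $O(n^{2(1-\mu k)+o(1)})$, which is negligible against $n^{1-\mu k}$. For $s=\Theta(n)$, the quantity $m p_s=\Omega(n\log n)$ overwhelms the entropy bound $\binom{n}{s}\leq 2^{O(n)}$, so each term is super-polynomially small and even the $\Theta(n)$ such terms sum to something negligible.

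The main technical point is controlling the lower bound on $p_s$ uniformly over the whole range $s\in[1,\alpha n]$: the estimate $p_s\approx ks/n$ is sharp only as $s/n\to 0$, and the constant $(1-\alpha)^{k-1}$ from the telescoping product degrades as $\alpha\to 1$. However, it remains strictly positive for every fixed $\alpha<1$, and the extra $\log n$ factor built into $m=\mu n\log n$ ensures that any positive constant suffices to defeat the entropy bound in the linear-size regime. The $\epsilon$ slack in the exponent $1-\mu k+\epsilon$ then comfortably absorbs the $o(1)$ corrections coming from the $(1-o(1))$ factor in the telescoping estimate and the polynomial-factor overhead from summing $\lfloor\alpha n\rfloor$ terms.
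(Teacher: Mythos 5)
Your proposal is correct and follows essentially the same route as the paper's proof: a first-moment (union) bound over all candidate stopping-set sizes $s\in[1,\alpha n]$, lower-bounding the per-constraint hitting probability $p_s$ by the $|\partial(a)\cap V'|=1$ term (which is precisely the $d=2$ specialization in the paper), a two-regime split at some small $\delta n$ so that the sublinear regime is governed by $n^{s(1-\mu k(1-\delta)^{k-1}+o(1))}$ with the $s=1$ term dominant, and the linear-size regime crushed by the extra $\log n$ factor in $m=\mu n\log n$ against the $\binom{n}{s}=e^{O(n)}$ entropy. The paper packages the sublinear tail more compactly via $\sum_{s\ge 1}\binom{n}{s}x^s\le(1+x)^n-1$ with $x=n^{-\mu k(1-\delta)^{k-1}+o(1)}$, but this is equivalent to your geometric-series estimate; one small imprecision in your write-up is that the uniform constant over $[1,\alpha n]$ is only $(1-\alpha)^{k-1}$, so you do in fact need the explicit $\delta$-cutoff (as you acknowledge) rather than treating $c$ as close to $1$ on all of $s=o(n)$ at once, and then choose $\delta$ small in terms of $\epsilon$ so that $\mu k(1-\delta)^{k-1}\ge \mu k-\epsilon$.
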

\begin{proof}
Let $Z(l)$ be a random variable corresponding to the number of $d$-stopping sets of size $l$ for the randomly generated bipartite graph.
Then, the probability that randomly generated bipartite graph includes $d$-stopping set of size at most $\alpha n$ is
upper bounded by Markov's inequality
\begin{equation*}
\Pr\left(\sum_{l=1}^{\alpha n}Z(l)\ge 1\right)
\le
\sum_{l=1}^{\alpha n} \mathbb{E}[Z(l)].
\end{equation*}
The expected number of $d$-stopping sets of size $l$ is simply represented as
\begin{equation*}
\mathbb{E}[Z(l)]
=
\binom{n}{l}
\left(\sum_{s=0, d,d+1,\dotsc,k} \frac{\binom{l}{s}\binom{n-l}{k-s}}{\binom{n}{k}}\right)^m.
\end{equation*}
Especially for $d=2$, it holds
\begin{equation*}
\mathbb{E}[Z(l)]
=
\binom{n}{l}
\left(1-\frac{l\binom{n-l}{k-1}}{\binom{n}{k}}\right)^m.
\end{equation*}
When $m=\gamma n$ for some constant $\gamma>0$, it holds
\begin{equation*}
\frac1n\log \mathbb{E}[Z(\delta n)]
=
h(\delta)
+
\gamma \log \left(1-k \delta(1-\delta)^{k-1}\right)
+ o(1)
\end{equation*}
for any $\delta\in(0,1)$ where $h$ denotes the binary entropy function.
Hence, for any fixed $\delta\in(0,\alpha)$, there is a constant $\gamma_\delta$ such that
\begin{equation*}
h(\delta')+
\gamma_\delta \log \left(1-k \delta'(1-\delta')^{k-1}\right)
\le -1
\end{equation*}
for any $\delta'\in[\delta,\alpha]$.
Hence, 
\begin{equation*}
\sum_{l=\delta n}^{\alpha n}\mathbb{E}[Z(l)]
\le n\exp\{-n\}
\end{equation*}
when $m=\gamma_\delta n$.
From an inequality
\begin{equation*}
\log\left(1-\frac{l\binom{n-l}{k-1}}{\binom{n}{k}}\right)
\le
-\frac{l\binom{n-l}{k-1}}{\binom{n}{k}}
\end{equation*}
one obtains for $m=\mu n\log n$ that
\begin{align*}
\sum_{l=1}^{\delta n}\mathbb{E}[Z(l)]
&\le
\sum_{l=1}^{\delta n}\binom{n}{l}\exp\left\{-m\frac{l\binom{n-\delta n}{k-1}}{\binom{n}{k}}\right\}
\le\left(1+\exp\left\{-m\frac{\binom{n-\delta n}{k-1}}{\binom{n}{k}}\right\}\right)^n-1\\
&=
\left(1+n^{-\mu k(1-\delta)^{k-1} + o(1)}\right)^n-1
\end{align*}
for any $\delta\in(0,1)$.
Let $\delta_\mu := 1-1/(\mu k)^{1/(k-1)}$.
For any $\mu>1/k$ and any $\delta \in (0,\delta_\mu)$, it holds $\mu k(1-\delta)^{k-1}>1$,
i.e.,
\begin{equation*}
\left(1+n^{-\mu k(1-\delta)^{k-1} + o(1)}\right)^n-1
=O\left(n^{1-\mu k(1-\delta)^{k-1}}\right).
\qedhere
\end{equation*}
\end{proof}
Conversely if $m=\mu n\log n$ for $\mu<1/k$, from the theory of the coupon collector's problem,
with high probability there exists a variable which is not included in any constraint.
Hence, there exists a $d$-stopping set of size 1 with high probability.
For $\alpha$-large $k$-stopping set, the threshold is obtained as follows.
\begin{lemma}\label{lem:2large}
For any $\mu>(k(k-1))^{-1}$, there exists $\alpha\in(0,1)$
such that
the randomly generated bipartite graph including $m=\mu n$ constraints does not have
$\alpha$-large $k$-stopping set in $\{u(n)+1,\dotsc,n\}$ with probability exponentially close to 1 with respect to $u(n)$.
\end{lemma}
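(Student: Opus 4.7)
The plan is to characterize $k$-stopping sets as unions of connected components of the random $k$-uniform hypergraph $H$ on vertex set $[n]$ with hyperedge set $\{\partial(a)\}_{a\in[m]}$. The defining condition $|\partial(a)\cap V'|\in\{0,k\}$ for every $a\in[m]$ simply says that each hyperedge lies entirely inside or entirely outside $V'$, so $V'\subseteq[n]$ is a $k$-stopping set if and only if $V'$ is a union of connected components of $H$.

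The critical observation is that $\mu=(k(k-1))^{-1}$ is precisely the giant-component threshold for $H$. Exploring a component through the standard branching-process heuristic gives expected offspring $\mu k(k-1)$ per step: each explored vertex lies in approximately $\mathrm{Poisson}(\mu k)$ hyperedges, and each such hyperedge contributes $k-1$ further vertices. For $\mu>(k(k-1))^{-1}$ this is supercritical, and I would invoke the classical giant-component result for random $k$-uniform hypergraphs to obtain constants $\pi=\pi(\mu,k)\in(0,1)$ and $c=c(\mu,k)>0$ such that with probability at least $1-e^{-cn}$ the hypergraph $H$ contains a unique component $L_1$ of size at least $\pi n$.

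Next, I would exploit the vertex symmetry of the ensemble: since the distribution of $H$ is invariant under permutations of $[n]$, conditional on its size the vertex set of $L_1$ is uniformly distributed among all $|L_1|$-subsets of $[n]$. The hypergeometric bound $\binom{n-u(n)}{|L_1|}/\binom{n}{|L_1|}\le(1-|L_1|/n)^{u(n)}$ together with $|L_1|\ge\pi n$ then gives $\Pr(L_1\cap\{1,\dotsc,u(n)\}=\emptyset,\ |L_1|\ge\pi n)\le(1-\pi)^{u(n)}$, which is exponentially small in $u(n)$.

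Finally, pick $\alpha\in(1-\pi,1)$, for instance $\alpha=1-\pi/2$. On the intersection of the giant-component event and the event that $\{1,\dotsc,u(n)\}$ meets $L_1$, every $k$-stopping set $V'\subseteq\{u(n)+1,\dotsc,n\}$ is a union of components of $H$ that excludes $L_1$, so $|V'|\le n-|L_1|\le(1-\pi)n<\alpha n$, contradicting $\alpha$-largeness. The total failure probability is therefore bounded by $(1-\pi)^{u(n)}+e^{-cn}=e^{-\Omega(u(n))}$ since $u(n)=o(n)$. The main obstacle will be matching a standard giant-component statement to the present ensemble of $m=\mu n$ independently drawn $k$-tuples and obtaining a lower bound on $|L_1|$ with the required exponential-in-$n$ concentration; a direct first-moment approach will not suffice, because the rate function $h(\tau)+\mu\log(\tau^k+(1-\tau)^k)$ governing the expected number of stopping sets whose complement has size $\tau n$ is positive for small $\tau$ whenever $\mu$ is only slightly above $(k(k-1))^{-1}$.
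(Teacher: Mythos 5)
Your proof is correct and takes essentially the same route as the paper: identify $k$-stopping sets with unions of connected components of the random $k$-uniform hypergraph, invoke the supercritical giant-component result (threshold at $\mu=(k(k-1))^{-1}$, size concentrated exponentially fast around $(1-\rho)n$) to get a component of size $\ge \pi n$ with probability $1-e^{-\Omega(n)}$, and bound the probability that all $u(n)$ known vertices avoid it by roughly $(1-\pi)^{u(n)}$. The only cosmetic difference is that the paper pins down $\pi=1-\rho$ via the fixed-point equation $\rho=\exp\{\mu k(\rho^{k-1}-1)\}$ and leaves the union-of-components characterization and the vertex-symmetry/hypergeometric step implicit, both of which you make explicit; your closing worry about matching a giant-component statement to this ensemble is handled in the paper simply by citation.
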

\begin{proof}
From the theory of random hypergraphs, if $m=\mu n$ where $\mu> (k(k-1))^{-1}$, then the random hypergraph has a giant component,
which is a connected component of size proportional to $n$,
 with probability tends to 1 exponentially fast as $n\to\infty$~\cite{schmidt1985component}, \cite{behrisch2007local}.
It is also shown in~\cite{RSA:RSA20160} that the size of giant component is concentrated around $(1-\rho)n$ where $\rho\in(0,1)$ is the unique solution of
\begin{equation*}
\rho=\exp\{\mu k (\rho^{k-1}-1)\}.
\end{equation*}
Hence, the probability that the size of giant component is greater than $(1-\rho-\delta)n$ tends to 1 exponentially fast with respect to $n$
for any $\delta>0$.
In that case, the probability that all of the $u(n)$ known variables are not included in the giant component is at most
$(\rho+\delta)^{u(n)}$.
\end{proof}

From Lemmas~\ref{lem:lpss},~\ref{lem:small} and \ref{lem:2large}, 
if all local functions have MDS inverse image of dimension $r-1$ for $r=2$,
then the randomly generated Goldreich's generator including $m=\mu n\log n$ local functions is inverted 
by the LP~\eqref{eq:LP} with high probability for any $\mu > 1/k$.
The converse of Lemma~\ref{lem:2large} is also obtained as follows.
\begin{lemma}\label{lem:2ML}
For any $\mu<(k(k-1))^{-1}$,
the randomly generated bipartite graph including $m=\mu n$ constraints has
$k$-stopping set of size larger than $n-(1+\tau) u(n)$ included in $\{u(n)+1,\dotsc,n\}$ with high probability
for any $\tau$ strictly larger than
\begin{equation*}
\frac{k(k-1)\mu}{1-k(k-1)\mu}.
\end{equation*}
\end{lemma}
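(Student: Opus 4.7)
The plan is to translate the statement into a subcritical random hypergraph question and then apply a branching-process analysis of the exploration starting from the $u(n)$ known vertices. First, I would observe that a $k$-stopping set $V'$ is exactly a union of connected components of the $k$-uniform hypergraph $H$ whose hyperedges are the constraint neighborhoods $\{\partial(a)\}_{a\in[m]}$: the defining condition $|\partial(a)\cap V'|\in\{0,k\}$ says that every hyperedge lies entirely inside $V'$ or entirely inside its complement. Consequently, the largest $k$-stopping set contained in $\{u(n)+1,\dots,n\}$ is the complement of $U^\ast:=\bigcup_{i=1}^{u(n)} C(i)$, where $C(i)$ is the connected component of $H$ containing vertex $i$. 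The lemma therefore reduces to showing $|U^\ast|<(1+\tau)\,u(n)$ with high probability, and the hypothesis $\tau > k(k-1)\mu/(1-k(k-1)\mu)$ is equivalent to $1+\tau > 1/(1-k(k-1)\mu)$, so the target sits strictly above the subcritical branching-process mean $1/(1-k(k-1)\mu)$ that will appear below.

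Next I would estimate $|U^\ast|$ through a sequential breadth-first exploration of $H$ seeded by $\{1,\dots,u(n)\}$. Let $S_t$ be the number of active (discovered but unprocessed) vertices after $t$ processing steps, with $S_0=u(n)$; the exploration halts at the first $T$ with $S_T=0$, and $T=|U^\ast|$ by construction. While the explored set has size $o(n)$, the number of yet-unrevealed hyperedges incident to a freshly processed vertex is asymptotically $\mathrm{Poisson}(\mu k)$, and each such hyperedge contributes $k-1$ new active vertices up to an $O(t/n)$ overlap correction; hence the increment $S_{t+1}-S_t$ has conditional mean $\mu k(k-1)-1+o(1)<0$ and sub-exponential tails. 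From the identity $-u(n)=\sum_{t=0}^{T-1}(S_{t+1}-S_t)$ together with a Chernoff bound for random walks with strictly negative drift and sub-exponential increments, one obtains, for every $\epsilon>0$,
\begin{equation*}
\Pr\!\left(T > (1+\epsilon)\,\frac{u(n)}{1-k(k-1)\mu}\right) \le e^{-c(\epsilon)\,u(n)}
\end{equation*}
for some $c(\epsilon)>0$. Choosing $\epsilon$ so that $(1+\epsilon)/(1-k(k-1)\mu)=1+\tau$, which is possible by the assumption on $\tau$, yields the required bound on $|U^\ast|$, and since $u(n)\in\omega(1)$ the failure probability tends to $0$.

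The hard part will be making the branching-process approximation precise uniformly throughout the exploration. Since $u(n)=o(n)$ and the target size $(1+\tau)u(n)$ is also $o(n)$, vertex and edge depletion affect the increment distribution only by $o(1)$ factors, but this has to be folded carefully into the moment-generating-function bound so that the Chernoff exponent is uniform in $t$ over the relevant range. A clean alternative, essentially the route taken in the subcritical random hypergraph literature cited for Lemma~\ref{lem:2large} such as~\cite{schmidt1985component,behrisch2007local,RSA:RSA20160}, is to stochastically dominate each $C(i)$ by the total progeny of an independent Galton--Watson tree with $\mathrm{Poisson}(\mu k(k-1))$ offspring, exploit the exponential tail of that progeny in the subcritical regime, and apply a Chernoff bound to the sum of $u(n)$ independent upper-bounding copies. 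Either route produces the desired exponential-in-$u(n)$ concentration and hence the lemma.
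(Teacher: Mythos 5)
Your proof is correct, and it takes a genuinely different route from the paper's. You exploit the special structure available at $d=k$ (equivalently $r=2$): the condition $|\partial(a)\cap V'|\in\{0,k\}$ means that a $k$-stopping set is exactly a union of connected components of the underlying $k$-uniform hypergraph, so the largest $k$-stopping set inside $\{u(n)+1,\dots,n\}$ is the complement of $U^\ast=\bigcup_{i\le u(n)} C(i)$, and the lemma reduces to a bound on $|U^\ast|$ obtained via a subcritical exploration/branching-process argument with negative-drift random-walk concentration. The paper does not give a standalone proof of this lemma; it proves it as the $r=2$ case inside the proof of Theorem~\ref{thm:ML}, by tracking the full degree profile $(C_j(t))_j$ of constraints during the $(k-r+2)$-peeling process as a Markov chain, constructing a dominating chain $\overline{E}_1^{k-r+1}(t)$ that removes the survival conditioning, computing the moment generating function of $\overline{E}_1^{k-r+1}(\tau u(n))$ explicitly via a recursion on exponential tilting parameters (Theorem~\ref{thm:UG}), and then applying a Chernoff bound; for $r=2$ the resulting condition $\mu r\binom{k}{r}(1+\tau)-\tau<0$ simplifies to exactly $\tau>\mu k(k-1)/(1-\mu k(k-1))$. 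Your argument is cleaner and more classical for this $r=2$ case, at the price of not generalizing directly: for $r\ge3$ a $(k-r+2)$-stopping set is no longer a union of components (constraints may meet $V'$ in between $k-r+2$ and $k-1$ vertices), so the component-exploration reduction breaks down, which is why the paper sets up the heavier degree-tracking machinery once and specializes it to both $r=2$ and $r\ge3$.
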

For $r=2$, the condition $m\ge(1/k+\delta)n\log n$ for vanishing small stopping sets is stronger than
the condition $m\ge ([k(k-1)]^{-1}+\delta) n$ for vanishing large stopping sets.
On the other hand, for $r\ge 3$, the situation is different unless $u(n)$ is quite large,
i.e., $u(n)=\Omega(n/(\log n)^{1/(r-2)})$.

\begin{theorem}\label{thm:ML}
Fix $r\ge 3$.
For any constant $\mu<\mu_\mathrm{c}(k,r)=\binom{k}{r}^{-1}\frac{(r-2)^{r-2}}{r(r-1)^{r-1}}$,
the randomly generated bipartite graph including $m=\mu \frac{n^{r-1}}{u(n)^{r-2}}$ constraints has
$(k-r+2)$-stopping set of size larger than $n-(1+\tau) u(n)$ included in $\{u(n)+1,\dotsc,n\}$ with high probability $p(n, \mu, \tau)$
for any $\tau>\tau^*$ where $\tau^*\in(0,1/(r-2))$ is the unique solution in $(0,1/(r-2))$ of
\begin{equation*}
\mu = \frac1{\binom{k}{r}}\frac{\tau^*}{r(1+\tau^*)^{r-1}}.
\end{equation*}
Here, the probability $p(n,\mu,\tau)$ is at least
 $1-\exp\{\inf_{\lambda>0,\tau'\in(0,\tau)}\allowbreak \varphi_{k,r}(\mu,\lambda,\tau') u(n) + O(\max\{1,u(n)^2/n\})\}$
for
\begin{equation}\label{eq:Ek}
\varphi_{k,r}(\mu,\lambda,\tau):= \mu \left(\exp\{(k-r+1)\lambda\}-1\right)\binom{k}{r-1}(1+\tau)^{r-1}- \lambda\tau.
\end{equation}
\end{theorem}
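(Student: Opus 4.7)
The goal is to bound $\Pr\bigl(|W|\ge(1+\tau)u(n)\bigr)$, where $W=[n]\setminus V'$ is the set of variables removed by the $(k-r+2)$-peeling algorithm started from $V=\{u(n)+1,\dotsc,n\}$, so $W\supseteq W_0:=\{1,\dotsc,u(n)\}$ always. My plan is a Chernoff (exponential Markov) bound on the overshoot $|W|-u(n)$, optimised jointly over an exponential parameter $\lambda>0$ and an auxiliary threshold $\tau'\in(0,\tau)$. The point of introducing $\tau'<\tau$ is plain monotonicity: $\{|W|\ge(1+\tau)u(n)\}\subseteq\{|W|\ge(1+\tau')u(n)\}$, so bounding the event at a smaller threshold is always at least as strong, and the infimum over $\tau'$ selects the best choice just above $\tau^*$.

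For a fixed $\tau'$, I will run the truncated peeling that halts as soon as $|W|\ge L:=(1+\tau')u(n)$ and call the resulting set $W^\star$; since each firing adds at most $k-r+1$ variables, $|W^\star|\le L+k-r$. Exponential Markov then gives
\[
\Pr(|W|\ge L)\le e^{-\lambda\tau'u(n)}\,\mathbb{E}\bigl[e^{\lambda(|W^\star|-u(n))}\bigr].
\]
The deterministic structural ingredient is that every firing uses a distinct constraint $a$ with at least $r-1$ of its neighbours in $W$ at the moment it fires, and since $W$ is monotone non-decreasing those neighbours remain in $W^\star$; hence
\[
|W^\star|-u(n)\le(k-r+1)\,\bigl|\{a\in[m]:|\partial(a)\cap W^\star|\ge r-1\}\bigr|.
\]
Plugging this in and using the identity $e^{c\,\mathbb{I}[E]}=1+(e^{c}-1)\mathbb{I}[E]$ factorises the exponential moment,
\[
\mathbb{E}\bigl[e^{\lambda(|W^\star|-u(n))}\bigr]\le\mathbb{E}\Bigl[\prod_{a=1}^{m}\bigl(1+(e^{(k-r+1)\lambda}-1)\mathbb{I}[|\partial(a)\cap W^\star|\ge r-1]\bigr)\Bigr].
\]
Replacing the random $W^\star$ by a fixed dominating $W^\dagger\supseteq W_0$ of size $L+k-r$ and exploiting the independence of the $\partial(a)$'s across $a$, each factor is at most $1+(e^{(k-r+1)\lambda}-1)\binom{k}{r-1}(L/n)^{r-1}(1+O(u(n)/n))$; the identity $m(L/n)^{r-1}=\mu(1+\tau')^{r-1}u(n)$ together with $1+x\le e^{x}$, combined with the Markov prefactor, yields exactly $\exp\bigl\{\varphi_{k,r}(\mu,\lambda,\tau')u(n)+O(\max\{1,u(n)^2/n\})\bigr\}$.

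Taking the infimum over $\lambda>0$ and $\tau'\in(0,\tau)$ produces the stated bound. To check the infimum is strictly negative for $\tau>\tau^*$, use $(k-r+1)\binom{k}{r-1}=r\binom{k}{r}$ to compute
\[
\partial_{\lambda}\varphi_{k,r}(\mu,\lambda,\tau')\big|_{\lambda=0}=r\binom{k}{r}\mu(1+\tau')^{r-1}-\tau',
\]
which vanishes at precisely the solution $\tau'=\tau^*$ of $\mu=\tau'/(r\binom{k}{r}(1+\tau')^{r-1})$; since $\tau'\mapsto\tau'/(1+\tau')^{r-1}$ is strictly increasing on $(0,1/(r-2))$, this derivative is strictly negative on $(\tau^*,\tau)$, so a sufficiently small $\lambda>0$ drives $\varphi_{k,r}$ strictly below zero. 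The main technical obstacle lies in justifying the step $W^\star\rightsquigarrow W^\dagger$: because $W^\star$ is itself coupled with the random neighbourhoods $\partial(a)$, the indicators in the product are not literally independent across $a$. The cleanest resolution is a principle-of-deferred-decisions argument that reveals the edges in the order demanded by peeling, so that the $m$ factors can be treated as effectively independent up to a controlled error; alternatively one can dominate by carefully chosen $W^\dagger$'s using stochastic monotonicity of the indicator in $W^\star$ and absorb the resulting overhead into the $O(\max\{1,u(n)^2/n\})$ term, which also accounts for the finite-size correction $\binom{|W^\dagger|}{r-1}/\binom{n}{r-1}\approx(|W^\dagger|/n)^{r-1}$ and the at-most-$(k-r)$ overshoot past $L$.
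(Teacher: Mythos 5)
Your overall Chernoff structure matches the paper's, and the identity $m\bigl((1+\tau')u(n)/n\bigr)^{r-1}=\mu(1+\tau')^{r-1}u(n)$ together with $(k-r+1)\binom{k}{r-1}=r\binom{k}{r}$ does recover the correct rate function $\varphi_{k,r}(\mu,\lambda,\tau')$. But the core step --- going from $\mathbb{E}\bigl[\prod_{a}\bigl(1+c\,\mathbb{I}[|\partial(a)\cap W^\star|\ge r-1]\bigr)\bigr]$ to a product of $m$ terms each evaluated on a \emph{fixed} set $W^\dagger$ --- is not justified, and the two fallbacks you gesture at do not patch it. The set $W^\star$ is a measurable function of \emph{all} the $\partial(a)$'s, so the indicators are correlated across $a$ through $W^\star$; ``stochastic monotonicity'' alone does not produce a fixed dominating set. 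If one tries a union bound over the $\binom{n-u(n)}{\tau' u(n)}$ candidate sets $W^\dagger$, the resulting multiplicative overhead is $\exp\{\Theta(u(n)\log(n/u(n)))\}$, which for $u(n)\in o(n)$ is not absorbable into $O(\max\{1,u(n)^2/n\})$; in fact it swamps the exponentially small main term $\exp\{\varphi u(n)\}$ entirely. Your ``principle of deferred decisions'' fallback is the correct fix, but once you reveal the $k$-sets $\partial(a)$ only as peeling consumes them, you have effectively re-derived the paper's analysis: they track the vector of constraint-degree counts $[C_{k-r+2}(t),\dotsc,C_k(t)]$ and the edge budget $E_1^{k-r+1}(t)$ as a Markov chain in the time index $t$ of the peeling, dominate it by a simpler chain $\overline{E}_1^{k-r+1}(t)$ (replacing the removal of all edges of the fired constraint by the removal of exactly one edge), and compute $\mathbb{E}[\exp\{\lambda\overline{E}_1^{k-r+1}(\tau u(n))\}]$ by iterating the one-step recursion of the moment generating function (Theorem~\ref{thm:UG}). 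That per-step MGF recursion, not per-constraint factorisation, is what delivers the clean $\exp\{\varphi_{k,r}(\mu,\lambda,\tau')u(n)+O(\max\{1,u(n)^2/n\})\}$ form. Your structural inequality $|W^\star|-u(n)\le(k-r+1)\,|\{a:|\partial(a)\cap W^\star|\ge r-1\}|$ is a nice reformulation, but by itself it does not reduce the problem to an i.i.d.\ computation; the dependence has to be resolved dynamically, as the paper does.
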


From Theorem~\ref{thm:ML} and Lemma~\ref{lem:sslp}, for any constant $\mu < \mu_\mathrm{c}(k,r)$,
the LP relaxation~\eqref{eq:LP} with the $u(n)$ correctly assigned variables for the planted $k$-CSP
problem including $m=\mu\frac{n^{r-1}}{u(n)^{r-2}}$ constraints has a solution whose integral part is at most $(1+\tau)u(n)$ with high probability.
Hence, Theorem~\ref{thm:main0} is shown.
The converse is also obtained as follows.

\begin{theorem}\label{thm:MU}
Fix $r\ge 3$.
For any $\alpha\in(0,1)$ and
for any constant $\mu>\mu_\mathrm{c}(k,r)$,
the randomly generated bipartite graph including $m=\mu \frac{n^{r-1}}{u(n)^{r-2}}$ constraints does not have
$\alpha$-large $(k-r+2)$-stopping set included in $\{u(n)+1,\dotsc,n\}$ with probability at least
 $1-\exp\{\sup_{\tau >0}\inf_{\lambda<0}\allowbreak \varphi_{k,r}(\mu,\lambda,\tau) u(n) + O(\max\{u(n)^2/n,\,\log u(n)\})\}$.
\end{theorem}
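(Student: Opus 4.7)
The plan is to bound the probability via the first-moment method applied to the count $Z_w$ of $(k-r+2)$-stopping sets $V'\subseteq\{u(n)+1,\dotsc,n\}$ whose complement has size $w$. Since an $\alpha$-large stopping set satisfies $w\in[u(n),(1-\alpha)n]$, Markov's inequality reduces matters to bounding
\[
\sum_{w=u(n)}^{(1-\alpha)n}E[Z_w]=\sum_{w}\binom{n-u(n)}{w-u(n)}\,P_{\mathrm{stop}}(w)^{m},
\]
where $P_{\mathrm{stop}}(w)$ is the probability that a uniformly random $k$-subset $\partial(a)\subseteq[n]$ satisfies $|\partial(a)\cap W|\in\{0,\dotsc,r-2,k\}$ for a fixed $W$ of size $w$. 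Setting $\tau:=(w-u(n))/u(n)$ and expanding
\[
P_{\mathrm{bad}}(w):=1-P_{\mathrm{stop}}(w)=\binom{k}{r-1}(w/n)^{r-1}(1-w/n)^{k-r+1}+O\bigl((w/n)^{r}\bigr)
\]
identifies the dominant ``just-bad'' event as a constraint with exactly $r-1$ variables in $W$ and $k-r+1$ variables in $V'$.

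The key step will be a Cram\'er-type exponential tilt with parameter $\lambda<0$ that handles the binomial count and the stopping probability jointly. Treating the number of bad constraints as approximately Poisson with mean $\mu\binom{k}{r-1}(1+\tau)^{r-1}u(n)$, the Chernoff bound on $P_{\mathrm{stop}}(w)^{m}$ contributes $\exp\bigl(\mu\binom{k}{r-1}(1+\tau)^{r-1}u(n)(e^{(k-r+1)\lambda}-1)\bigr)$, in which the factor $(k-r+1)$ in the exponent records the number of $V'$-variables per just-bad constraint. Matching the binomial count $\binom{n-u(n)}{w-u(n)}$ termwise against this moment generating function yields a combined contribution $-\lambda\tau u(n)$, and the resulting bound is
\[
E[Z_{w}]\le\exp\bigl(\varphi_{k,r}(\mu,\lambda,\tau)\,u(n)+O(u(n)^{2}/n)\bigr)
\]
for every $\lambda<0$, with the error term arising from the higher-order $(w/n)^{j}$, $j\ge r$, corrections to $P_{\mathrm{bad}}$. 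Taking $\inf_{\lambda<0}$ for each $\tau$ and $\sup_{\tau>0}$ across all possible complement sizes produces the exponent in the theorem. The hypothesis $\mu>\mu_{\mathrm{c}}(k,r)$ makes $\mu(k-r+1)\binom{k}{r-1}(1+\tau)^{r-1}>\tau$ for every $\tau>0$---since $\max_{\tau>0}\tau/(1+\tau)^{r-1}=(r-2)^{r-2}/(r-1)^{r-1}$ is attained at $\tau=1/(r-2)$ and equals $\mu_{\mathrm{c}}\binom{k}{r-1}(k-r+1)$---which forces $\inf_{\lambda<0}\varphi_{k,r}(\mu,\lambda,\tau)<0$ uniformly in $\tau$, and hence $\sup_{\tau}\inf_{\lambda}\varphi<0$. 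The sum over $w$ contributes only an $O(\log u(n))$ overhead once the range $w=\Theta(n)$ is split off, where the superexponential decay $\log P_{\mathrm{stop}}(w)^{m}=-\Omega(n^{r-1}/u(n)^{r-2})$ makes those contributions negligible.

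The hard part will be executing the joint Cram\'er tilt cleanly. A naive Stirling estimate $\log\binom{n-u(n)}{\tau u(n)}\sim\tau u(n)\log(n/(\tau u(n)))$ would, for fixed $\tau>0$ and $n\to\infty$, dominate the linear-in-$u(n)$ decay of $P_{\mathrm{stop}}^{m}$ and trivialize the first-moment bound in the middle range of $w$, so the combinatorial factor must not be bounded independently of the Poisson moment generating function: the two must be optimized simultaneously so as to collapse to the rate function $\varphi_{k,r}$ with its characteristic $(k-r+1)\lambda$ exponent (reflecting the $V'$-variables per just-bad constraint) and $-\lambda\tau$ tilt. Careful bookkeeping of the higher-order $(w/n)^{j}$ corrections and of the polynomial number of summands in the union bound then produces the $O(\max\{u(n)^{2}/n,\,\log u(n)\})$ error appearing in the statement.
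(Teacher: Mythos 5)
Your proposal is not the paper's argument, and it has a real gap. The paper does not bound $\Pr(\exists\,\alpha\text{-large stopping set})$ by a first-moment count over candidate stopping sets. Instead it analyzes the $(k-r+2)$-peeling algorithm as a Markov chain, introduces the surrogate chain $\underline{E}_1^{k-r+1}(t)$ (Eq.~\eqref{eq:MarkovL}), computes its moment generating function (Theorem~\ref{thm:LG}), applies a Chernoff bound to show that $\underline{E}_1^{k-r+1}(t)>0$ for all $t\le\tau u(n)$ with the stated error probability, and finally uses Theorem~\ref{thm:LG2} together with the giant-component argument of Lemma~\ref{lem:2large} to guarantee that the remaining graph after $\tau u(n)$ steps still triggers a linear-size cascade. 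The function $\varphi_{k,r}$ and the $\sup_\tau\inf_\lambda$ structure come from the union bound over the $\tau u(n)$ peeling steps, not from a union bound over complement sizes.

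The first-moment count you propose fails outright in the intermediate range. Take $w=(1+\tau)u(n)$ with $\tau$ a fixed constant: you yourself note $\log\binom{n-u(n)}{\tau u(n)}\sim\tau u(n)\log(n/u(n))$, while $\log P_{\mathrm{stop}}(w)^m\approx-\mu\binom{k}{r-1}(1+\tau)^{r-1}u(n)=\Theta(u(n))$. Since $u(n)=o(n)$ forces $\log(n/u(n))\to\infty$, the first factor strictly dominates, so $\mathbb{E}[Z_w]\to\infty$ superexponentially in $u(n)$ and Markov's inequality on $Z_w$ gives nothing. Crucially, there is no way to ``tilt'' your way out of this: $\binom{n-u(n)}{\tau u(n)}$ is a deterministic quantity, and the step where you claim that ``matching the binomial count termwise against this moment generating function yields a combined contribution $-\lambda\tau u(n)$'' does not correspond to any valid manipulation---the binomial coefficient contributes $\tau u(n)\log(n/u(n))(1+o(1))$ regardless of any $\lambda$. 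In the paper's derivation, the $-\lambda\tau$ term in $\varphi_{k,r}$ comes from the deterministic $-1$ decrement of $\underline{E}_1^{k-r+1}$ at each step of the chain (equivalently, from evaluating the moment generating function of $\underline{E}_1^{k-r+1}(t)+t$ and then subtracting $\lambda t$), and the $(k-r+1)\lambda$ comes from each degree-$(k-r+2)$ constraint dropping to degree $k-r+1$ and thereby adding $k-r+1$ edges to the removable pool. Neither of these has an analogue in a first-moment count of fixed subsets. To prove Theorem~\ref{thm:MU} you really do need a process argument (peeling/Markov chain or an equivalent) rather than a union bound over stopping sets.
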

From Theorem~\ref{thm:MU} and Lemma~\ref{lem:lpss}, one obtains Theorem~\ref{thm:main1}.
The above results on thresholds of stopping sets are summarized in Table~\ref{tbl:ss}.
Lemma~\ref{lem:2ML} and Theorems~\ref{thm:ML} and \ref{thm:MU} are proved in the following sections.

\begin{table}
\renewcommand{\arraystretch}{1.3}
\caption{Thresholds of the number of constraints for $(k-r+2)$-stopping sets}
\label{tbl:ss}
\centering
\begin{tabular}{c|c|c}
\hline
& Small & Large\\
\hline
$r=2$ & $\frac1{k} n\log n$& $\frac1{k(k-1)}n$\\
\hline
$r\ge 3$ & $\frac1{k} n\log n$& $\frac1{\binom{k}{r}} \frac{(r-2)^{r-2}}{r(r-1)^{r-1}} \frac{n^{r-1}}{u(n)^{r-2}}$
\end{tabular}
\end{table}

\section{Evolution on number of equations for the peeling algorithm}
\subsection{The Markov chain}
In this section, we consider the numbers of constraints with particular degree at each step of the iterations of the $(k-r+2)$-peeling algorithm
starting from $\{u(n)+1,\dotsc,n\}$ on the randomly generated bipartite graph, which is essentially equivalent to
Erd\H{o}s-R\'{e}nyi hypergraph~\cite{RSA:RSA20160}.
For the analysis, we assume that only one constraint $a\in[m]$ of the degree at most $k-r+1$ is chosen in each step
and that one of the variable connected to the constraint $a$ is removed
from the bipartite graph.
Let $C_j(t)$ be a random variable corresponding to the number of constraint vertices of degree $j$ after $t$ iterations.
Obviously, $[C_0(0),\dotsc,C_k(0)]$ obeys the multinomial distribution
$\mathrm{Multinom}(m, p_0(n), p_1(n),\dotsc, p_k(n))$ where
\begin{align*}
p_j(n) := \frac{\binom{n-u(n)}{j}\binom{u(n)}{k-j}}{\binom{n}{k}}=
\binom{k}{j}\frac{u(n)^{k-j}}{n^{k-j}} + O\left(\frac{u(n)^{k-j+1}}{n^{k-j+1}}\right).
\end{align*}
Let $[B_{1}(t),B_2(t),\dotsc, B_{k-r+1}(t)]$ be a 0-1 random vector of weight 1 where $B_j(t)=1$ if a constraint of the degree $j$ is chosen at $(t+1)$-th 
iteration
and $B_j(t)=0$ otherwise.
We assume that a constraint is chosen uniformly from all constraints of the degree at most $k-r+1$.
Hence, 
\begin{equation*}
\Pr(B_j(t)=1\mid [C_0(t),\dotsc,C_k(t)]) = \frac{C_j(t)}{\sum_{j'=1}^{k-r+1} C_{j'}(t)}
\end{equation*}
if $\sum_{j=1}^{k-r+1}C_j(t)\ge 1$.
Let $N(t):=n-u(n)-t$ be the number of remaining variable nodes after $t$ iterations when the iterations continues until $t$-th step.
The set of random variables $([C_0(t),\dotsc,C_k(t)])_{t=0,1,\dotsc,N(0)}$
is a Markov chain satisfying $[C_0(t+1),\dotsc,C_k(t+1)]=[C_0(t),\dotsc,C_k(t)]$ if $\sum_{j=1}^{k-r+1}C_j(t)=0$ and
\begin{equation}
\begin{split}
C_k(t+1)&= C_k(t) - R_k(t)\\
C_{j}(t+1)&= C_{j}(t) - R_{j}(t) + R_{j+1}(t),\hspace{2em} \text{for}\hspace{1em} j=1,2,\dotsc,k-1\\
C_0(t+1)&= C_0(t) + R_1(t)
\end{split}
\label{eq:Markov}
\end{equation}
if $\sum_{j=1}^{k-r+1}C_j(t)\ge 1$
where $R_1(t),\dotsc, R_k(t)$ are independent random variables conditioned on $[C_0(t),\dotsc,C_k(t)]$ and $[B_1(t),\dotsc,B_{k-r+1}(t)]$ 
obeying
\begin{align*}
R_j(t) &\sim \mathrm{Binom}\left(C_j(t), \frac{j}{N(t)}\right),\hspace{2em} \text{for}\hspace{1em} j=k-r+2, k-r+3,\dotsc,k\\
R_j(t) &\sim B_j(t)+\mathrm{Binom}\left(C_j(t)-B_j(t), \frac{j}{N(t)}\right),\hspace{2em} \text{for}\hspace{1em} j=1,2,\dotsc,k-r+1.
\end{align*}
Similar analysis was considered in~\cite{Luby:1997:PLC:258533.258573},  \cite{Achlioptas2001159}, \cite{connamacher2012exact}.
In the papers, the number of constraints $m$ is proportional to $n$.
In that case, one can use Wormald's theorem which gives differential equations describing the behavior of the Markov chain.
In this paper, $m$ is not necessarily proportional to $n$.
Hence, different techniques are required.
Let $E_1^{k-r+1}(t):= \sum_{j=1}^{k-r+1} j C_j(t)$ be the number of edges connected to constraints of the degree at most $k-r+1$.
Then, the probability that the randomly generated bipartite graph does not have $(k-r+2)$-stopping set of size larger than $n-u(n)-t$ is
\begin{equation}
\Pr\left(E_1^{k-r+1}(0)\ge 1,E_1^{k-r+1}(1)\ge 1,\dotsc,E_1^{k-r+1}(t-1)\ge 1\right).
\label{eq:ps}
\end{equation}
For proving Lemma~\ref{lem:2ML}, Theorems~\ref{thm:ML} and \ref{thm:MU}, we analyze the probability~\eqref{eq:ps}.

\subsection{Markov chain for upper bound}
In this subsection, we show Lemma~\ref{lem:2ML} and Theorem~\ref{thm:ML}.
For the Markov chain~\eqref{eq:Markov}, it holds
\begin{equation}
\begin{split}
C_k(t+1)&= C_k(t) - R_k(t)\\
C_{j}(t+1)&= C_{j}(t) - R_{j}(t) + R_{j+1}(t),\hspace{2em} \text{for}\hspace{1em} j=k-r+2,k-r+3,\dotsc,k-1\\
E_1^{k-r+1}(t+1)&= E_1^{k-r+1}(t) - \sum_{j=1}^{k-r+1}R_j(t) + (k-r+1) R_{k-r+2}(t)
\end{split}
\label{eq:MarkovE}
\end{equation}
if $E_1^{k-r+1}(t)\ge 1$.
For upper bounding~\eqref{eq:ps}, we consider a new Markov chain
$([\overline{E}_1^{k-r+1}(t),\overline{C}_{k-r+2}(t),\dotsc,\overline{C}_k(t)])_{t=0,1,\dotsc,N(0)}$
 which satisfies 
$\underline{E}_1^{k-r+1}(0)=\sum_{j=1}^{k-r+1}jC_j(0)$,
$\overline{C}_j(0)=C_j(0)$ for $j=k-r+2,\dotsc,k$ and
\begin{equation}
\begin{split}
\overline{C}_k(t+1)&= \overline{C}_k(t)\\
\overline{C}_{j}(t+1)&= \overline{C}_{j}(t) + \overline{R}_{j+1}(t),\hspace{2em} \text{for}\hspace{1em} j=k-r+2,k-r+3,\dotsc,k-1\\
\overline{E}_1^{k-r+1}(t+1)&= \overline{E}_1^{k-r+1}(t) - 1 + (k-r+1)\overline{R}_{k-r+2}(t)\\
\end{split}
\label{eq:MarkovU}
\end{equation}
where
\begin{align*}
\overline{R}_j(t) &\sim \mathrm{Binom}\left(\overline{C}_j(t), \frac{j}{N(t)}\right),\hspace{2em} \text{for}\hspace{1em} j=k-r+2,k-r+3,\dotsc,k.
\end{align*}
The new Markov chain does not include the condition $E_1^{k-r+1}(t)\ge 1$ which appears in~\eqref{eq:MarkovE}.
Hence, it is easier to analyze the new Markov chain than the original Markov chain~\eqref{eq:Markov}.
Obviously,~\eqref{eq:ps} is upper bounded by
\begin{equation}
\Pr\left(\overline{E}_1^{k-r+1}(0)\ge 1,\overline{E}_1^{k-r+1}(1)\ge 1,\dotsc,\overline{E}_1^{k-r+1}(t-1)\ge 1\right).
\label{eq:psU}
\end{equation}
The following theorem is proved in Section~\ref{sec:evo}.

\begin{theorem}[Moment generating function of $\overline{E}_1^{k-r+1}(t)$]
\label{thm:UG}
Assume $m=\mu\frac{n^{r-1}}{u(n)^{r-2}}$ for arbitrary constant $\mu$ and $u(n)\in \omega(1)\cap o(n)$.
Then, for any constants $\tau>0$ and $\lambda$, it holds
$\mathbb{E}[\exp\{\lambda\overline{E}_1^{k-r+1}(\tau u(n))\}]=
\exp\{ \varphi_{k,r}(\mu,\lambda,\tau) u(n)+ O(\max\{1,u(n)^2/n\})\}$
where $\varphi_{k,r}(\mu,\lambda,\tau)$ is defined in~\eqref{eq:Ek}.
\end{theorem}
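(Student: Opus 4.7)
The plan is to compute $\mathbb{E}[\exp\{\lambda\overline{E}_1^{k-r+1}(t)\}]$ directly from the definition of the modified Markov chain \eqref{eq:MarkovU}, by unrolling the telescoping identity
\begin{equation*}
\overline{E}_1^{k-r+1}(t) = \overline{E}_1^{k-r+1}(0) \;-\; t \;+\; (k-r+1)\sum_{s=0}^{t-1}\overline{R}_{k-r+2}(s)
\end{equation*}
and applying the tower property one layer of the hierarchy at a time. The deterministic term $-t = -\tau u(n)$ is responsible for the $-\lambda\tau$ summand in $\varphi_{k,r}$, so the real work is to show that the remaining two expectations produce exactly $\mu\binom{k}{r-1}(e^{(k-r+1)\lambda}-1)(1+\tau)^{r-1}u(n)$ up to the stated error.

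For the initial term, $\overline{E}_1^{k-r+1}(0) = \sum_{j=1}^{k-r+1} j\,C_j(0)$ with $C_j(0)\sim\mathrm{Binom}(m,p_j(n))$. Using $p_j(n)=\binom{k}{j}(u(n)/n)^{k-j}(1+O(u(n)/n))$, only $j=k-r+1$ gives a $\Theta(u(n))$ contribution, namely $\mathbb{E}[C_{k-r+1}(0)]=\mu\binom{k}{r-1}u(n)(1+o(1))$, while $j<k-r+1$ contributes $O(u(n)^2/n)$ to the log-MGF via the binomial MGF identity $\log(1+p(e^{\theta}-1)) = p(e^\theta-1)+O(p^2)$. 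This gives the $\tau^0$ term $\mu\binom{k}{r-1}(e^{(k-r+1)\lambda}-1)u(n)$ and handles the $\tau=0$ case.

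For the summation, I condition on $\mathcal{F}_s$ to get $\mathbb{E}[e^{\lambda(k-r+1)\overline{R}_{k-r+2}(s)}\mid\mathcal{F}_s] = \bigl(1+\tfrac{k-r+2}{N(s)}(e^{(k-r+1)\lambda}-1)\bigr)^{\overline{C}_{k-r+2}(s)}$, then replace $\log(1+\alpha_s)$ by $\alpha_s$ at a cost of $O(\alpha_s^2\overline{C}_{k-r+2}(s))$ per step, using $\alpha_s=\Theta(1/n)$ and $N(s)=n(1+o(1))$ since $s\le\tau u(n)=o(n)$. This reduces the problem to computing $\mathbb{E}[\exp\{\zeta\sum_{s<t}\overline{C}_{k-r+2}(s)\}]$ for an explicit small constant $\zeta$. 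Now I expand $\overline{C}_{k-r+2}(s)=C_{k-r+2}(0)+\sum_{s'<s}\overline{R}_{k-r+3}(s')$, swap summation to extract the weight $(t-s')$ on $\overline{R}_{k-r+3}(s')$, and apply the tower property again; this brings in $\overline{C}_{k-r+3}$, which we in turn expand in terms of $\overline{C}_{k-r+4}$, and so on, until we reach $\overline{C}_k(t)\equiv C_k(0)$, which is non-random. At each of the $r-1$ successive layers the deterministic initial term $C_{k-r+1+i}(0)$ has mean $\mu\binom{k}{r-1-i}n^{i-1}u(n)^{2-i}/\ldots$ which, combined with the accumulated combinatorial factor $(t/n)^i \cdot \binom{k}{k-r+1+i}\cdot (\text{descent coefficients})$, produces the $i$-th term $\binom{r-1}{i}\tau^i$ of the binomial expansion of $(1+\tau)^{r-1}$; this identity is verified by an algebraic manipulation using $(k-r+1+i)\binom{k}{k-r+1+i}(r-1-i)\binom{k-r+1+i}{i-1}^{-1}\cdots$ collapsing to $(r-1)\binom{k}{r-1}\binom{r-2}{i-1}$-type identities.

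The main obstacle is the error control across the $r-1$ nested layers. Each layer introduces (i) the linearization error from $\log(1+\alpha)\to\alpha$, (ii) the replacement of $N(s)=n-u(n)-s$ by $n$, and (iii) a concentration estimate for $\overline{C}_j(s)$ around its mean in order to commute conditional expectations with the approximated exponential. Each of these contributes an additive $O(u(n)^2/n)$ (or $O(1)$ when $u(n)=O(\sqrt n)$) to the log-MGF, uniformly in $s$, and summing the $r-1=O(1)$ levels preserves the stated error bound $O(\max\{1,u(n)^2/n\})$. The linearization step (i) is the most delicate because one needs a tail bound on $\overline{C}_{k-r+2}(s)$ to bound $\alpha_s^2 \overline{C}_{k-r+2}(s)$ in expectation, which I handle by noting $\overline{C}_j(s)$ is stochastically dominated by a sum of binomials whose total mean is $O(n(u(n)/n)^{k-j}+t\cdot(j+1)/n\cdot\mathbb{E}[\overline{C}_{j+1}(s)])$, yielding the required bound after one more application of the conditional MGF identity.
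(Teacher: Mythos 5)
Your plan differs from the paper's in a real way, so a comparison is worth recording. The paper tracks the \emph{joint} moment generating function $\overline{f}_t(\lambda_{k-r+1},\dots,\lambda_k)$ of the whole state vector $[(\overline{E}_1^{k-r+1}(t)+t)/(k-r+1),\overline{C}_{k-r+2}(t),\dots,\overline{C}_k(t)]$ and observes that, because the $\overline{R}_j(t)$ are conditionally independent binomials given $\mathcal{F}_t$, one step of the chain produces the \emph{exact} recursion $\overline{f}_{t+1}(\lambda_{k-r+1},\dots,\lambda_k)=\overline{f}_t(\lambda_{k-r+1},\lambda'_{k-r+2},\dots,\lambda'_k)$ with $\lambda'_j=\lambda_j+\log\bigl(1-\tfrac{j}{N(t)}+\tfrac{j}{N(t)}e^{\lambda_{j-1}}\bigr)$. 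No approximation ever enters an expectation; all asymptotics are applied to the deterministic iterates $\lambda_j^{(t)}$ (a short induction on $j$) before evaluating $\overline{f}_0$, which is the exact multinomial MGF. You instead track the MGF of $\overline{E}_1^{k-r+1}(t)$ alone, unroll the telescoping sum, and integrate out the randomness one degree-level at a time ($\overline{R}_{k-r+2}$, then $\overline{R}_{k-r+3}$, and so on up to $C_k(0)$), linearizing $\log(1+\alpha)\mapsto\alpha$ inside the expectation at each of the $r-1$ levels. This can be made to work, and your combinatorial bookkeeping is right: $\binom{k}{k-r+1+i}\binom{k-r+1+i}{k-r+1}=\binom{k}{r-1}\binom{r-1}{i}$ is exactly the identity that turns the sum over $j$ in the paper into $\binom{k}{r-1}(1+\tau)^{r-1}$. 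Note also that your conditioning must be ``all randomness except the current level'' rather than the time-filtration $\mathcal{F}_t\to\mathcal{F}_{t-1}\to\cdots$, since $\overline{R}_{j}(s)$ and $\overline{R}_{j+1}(s')$ are interleaved in time; your expansion $\overline{C}_{k-r+2}(s)=C_{k-r+2}(0)+\sum_{s'<s}\overline{R}_{k-r+3}(s')$ implicitly does this, but it is worth saying explicitly. The paper's route buys you exactness and brevity; yours is more transparent about where each power of $\tau$ comes from, at the price of having to re-prove the same error control $r-1$ times.

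The one step that, as written, is not a proof is the error control in the linearization. Bounding $\mathbb{E}[\alpha_s^2\overline{C}_{k-r+2}(s)]$ in first moment, even via stochastic domination, does not control $\mathbb{E}[\exp\{\text{main}+\text{error}\}]$: the error is a \emph{random} quantity inside the exponential and is positively correlated with the main term, so a first-moment bound is not the right object. A pointwise bound $\overline{C}_{k-r+2}(s)\le m$ is far too weak, and a H\"older split would change the constant multiplying the main term and ruin the claimed leading order. The clean fix is a sandwich on the \emph{deterministic} parameter: $\alpha_s-\alpha_s^2\le\log(1+\alpha_s)\le\alpha_s$ (valid for both signs of $\alpha_s$ once $|\alpha_s|$ is small) traps the true quantity between two runs of the identical calculation with $\zeta$ replaced by $\zeta$ and by $\zeta(1+O(1/n))$, so the perturbation never touches the random side of the expectation and both bounds land inside the stated $O(\max\{1,u(n)^2/n\})$. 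The paper sidesteps this issue entirely by keeping the recursion exact and deferring all asymptotics to the deterministic $\lambda_j^{(t)}$.
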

Lemma~\ref{lem:2ML} and Theorem~\ref{thm:ML} can be proved by using Theorem~\ref{thm:UG} and the Chernoff bound.

\begin{proof}[Proof of Lemma~\ref{lem:2ML} and Theorem~\ref{thm:ML}]
From the Chernoff bound and Theorem~\ref{thm:UG}, one obtains an inequality
\begin{align*}
\Pr\left(\overline{E}_1^{k-r+1}(\tau u(n)) \ge 1\right)
&\le\Pr\left(\overline{E}_1^{k-r+1}(\tau u(n)) \ge 0\right)\\
&\le\mathbb{E}[\exp\{\lambda\overline{E}_1^{k-r+1}(\tau u(n))\}]
=\exp\{ \varphi_{k,r}(\mu,\lambda,\tau) u(n)+ O(\max\{1,u(n)^2/n\})\}
\end{align*}
for any constants $\tau\ge 0$ and $\lambda\ge 0$.
It holds
\begin{align*}
\frac{\partial \varphi_{k,r}(\mu,\lambda,\tau)}{\partial\lambda}
&=
\mu \exp\{(k-r+1)\lambda\}(k-r+1)\binom{k}{r-1}(1+\tau)^{r-1}- \tau\\
&=\mu \exp\{(k-r+1)\lambda\}r\binom{k}{r}(1+\tau)^{r-1}- \tau.
\end{align*}
If
\begin{equation}\label{eq:cond}
\left.\frac{\partial \varphi_{k,r}(\mu,\lambda,\tau)}{\partial\lambda}\right|_{\lambda=0}=
\mu r\binom{k}{r}(1+\tau)^{r-1}- \tau < 0
\end{equation}
then $\varphi_{k,r}(\mu,\lambda,\tau)$ is negative for sufficiently small $\lambda>0$ since $\varphi_{k,r}(\mu,0,\tau)=0$.
The condition~\eqref{eq:cond} is satisfied for some $\tau>0$ when
\begin{equation}\label{eq:sup}
\mu
< \frac1{r\binom{k}{r}}\sup_{\tau> 0} \frac{\tau}{(1+\tau)^{r-1}}.
\end{equation}
When $r=2$ the supremum is taken at $\tau\to+\infty$, and hence the condition~\eqref{eq:sup} is equivalent to $\mu < [k(k-1)]^{-1}$.
When $r\ge 3$ the supremum is taken at $\tau=1/(r-2)$, and hence the condition~\eqref{eq:sup} is equivalent to $\mu < \frac{(r-2)^{r-2}}{\binom{k}{r}r(r-1)^{r-1}}$.
If the condition~\eqref{eq:sup} is satisfied, then~\eqref{eq:cond} is satisfied for some $\tau$.
That means that there exists $(k-r+2)$-stopping set of size at least $n-(1+\tau)u(n)$.
By optimizing the Chernoff bound, one obtains Theorem~\ref{thm:ML}.
\end{proof}

\subsection{Markov chain for lower bound}
In this subsection, we prove Theorem~\ref{thm:MU}.
We can use the same argument as Lemma~\ref{lem:2large} for the $(k-r+2)$-peeling algorithm.
For $m=\mu\frac{n^{r-1}}{u(n)^{r-2}}$, it holds
\begin{equation*}
\mathbb{E}[C_{k-r+2}(0)]
=m p_{k-r+2}(n) = \mu \binom{k}{r-2} n + O(u(n)).
\end{equation*}
Hence, if $\mu > [r(r-1)\binom{k}{r}]^{-1}$,
it holds
$\mathbb{E}[C_{k-r+2}(0)]>([(k-r+2)(k-r+1)]^{-1} + \delta)n$ for sufficiently small $\delta>0$.
In this case, from the argument in the proof of Lemma~\ref{lem:2large}, linearly many variables are removed
 by the $(k-r+2)$-peeling algorithm with high probability.
However, $[r(r-1)\binom{k}{r}]^{-1}$ is strictly larger than $\mu_\mathrm{c}(k,r)$ for $r\ge 3$.
In the following, we will show that if $\mu>\mu_\mathrm{c}(k,r)$,
for any $\eta>0$ there exists $\tau>0$ such that
\begin{equation}\label{eq:psl}
\Pr\left(E_1^{k-r+1}(0)\ge 1,\dotsc, E_1^{k-r+1}(\tau u(n)-1)\ge 1, E_1^{k-r+1}(\tau u(n))\ge \eta u(n)\right)=1-o(1)
\end{equation}
and that if $\mu>\mu_\mathrm{c}(k,r)$,
there exists sufficiently small $\epsilon >0$ such that for any $\tau\ge 1/(r-2)$,
\begin{equation}\label{eq:p2l}
\Pr\left(C_{k-r+2}(\tau u(n))>[(k-r+2)(k-r+1)+\epsilon]^{-1} n\right) = 1-o(1).
\end{equation}
They give the proof of Theorem~\ref{thm:MU} except for the bound of probability.

For lower bounding the probabilities in~\eqref{eq:psl} and~\eqref{eq:p2l}, we consider a new Markov chain
$([\underline{E}_1^{k-r+1}(t),\underline{C}_{k-r+2}(t),\dotsc,\underline{C}_k(t)])_{t=0,1,\dotsc,N(0)}$
 which satisfies 
$\underline{E}_1^{k-r+1}(0)=\sum_{j=1}^{k-r+1}jC_j(0)$,
$\underline{C}_j(0)=C_j(0)$ for $j=k-r+2,\dotsc,k$ and
\begin{equation}
\begin{split}
\underline{C}_k(t+1)&= \underline{C}_k(t)-\underline{R}_k(t)\\
\underline{C}_{j}(t+1)&= \underline{C}_{j}(t)-\underline{R}_{j+1}(t) + \underline{R}_{j+1}(t),
\hspace{2em} \text{for}\hspace{1em} j=k-r+2,k-r+3,\dotsc,k-1\\
\underline{E}_1^{k-r+1}(t+1)&= \underline{E}_1^{k-r+1}(t) - 1 -\underline{R}_1^{k-r+1}(t) + (k-r+1)\underline{R}_{k-r+2}(t)\\
\end{split}
\label{eq:MarkovL}
\end{equation}
where
\begin{align*}
\underline{R}_j(t) &\sim \mathrm{Binom}\left(\underline{C}_j(t), \frac{j}{N(t)}\right),\hspace{2em} \text{for}\hspace{1em} j=k-r+2,k-r+3,\dotsc,k\\
\underline{R}_1^{k-r+1}(t) &\sim \mathrm{Binom}\left(\underline{E}_1^{k-r+1}(t)+t, \frac{1}{N(t)-k+r}\right).
\end{align*}
We obtain a lower bound of the probabilities in~\eqref{eq:psl} and \eqref{eq:p2l} by replacing the original Markov chain by the above new Markov chain.

\begin{theorem}[Moment generating function of $\underline{E}_1^{k-r+1}(t)$]
\label{thm:LG}
Assume $m=\mu\frac{n^{r-1}}{u(n)^{r-2}}$ for arbitrary constant $\mu$ and $u(n)\in \omega(1)\cap o(n)$.
Then, for any constants $\tau>0$ and $\lambda$, it holds
$\mathbb{E}[\exp\{\lambda\underline{E}_1^{k-r+1}(\tau u(n))\}]=
\exp\{ \varphi_{k,r}(\mu,\lambda,\tau) u(n)+ O(\max\{1,u(n)^2/n\})\}$
where $\varphi_{k,r}(\mu,\lambda,\tau)$ is defined in~\eqref{eq:Ek}.
\end{theorem}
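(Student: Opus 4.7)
The plan is to mirror the moment-generating-function calculation that proves Theorem~\ref{thm:UG}, while showing that the two features of~\eqref{eq:MarkovL} that distinguish it from the upper chain---the extra decrement $-\underline{R}_1^{k-r+1}(t)$ and the $-\underline{R}_j(t)$ terms in the update of $\underline{C}_j(t)$---only contribute to the $O(\max\{1,u(n)^2/n\})$ error. The first step is to condition on the natural filtration $\mathcal{F}_t$ of the chain and use the conditional independence of the binomial increments to obtain the one-step identity
\begin{align*}
\mathbb{E}\bigl[e^{\lambda\underline{E}_1^{k-r+1}(t+1)}\mid\mathcal{F}_t\bigr]
&= e^{-\lambda}e^{\lambda\underline{E}_1^{k-r+1}(t)}\Bigl(1+\tfrac{k-r+2}{N(t)}(e^{(k-r+1)\lambda}-1)\Bigr)^{\underline{C}_{k-r+2}(t)}\\
&\quad\times\Bigl(1+\tfrac{e^{-\lambda}-1}{N(t)-k+r}\Bigr)^{\underline{E}_1^{k-r+1}(t)+t}.
\end{align*}
Iterating this for $t=0,1,\ldots,\tau u(n)-1$ with the initial MGF $\mathbb{E}[e^{\lambda E_1^{k-r+1}(0)}]$ evaluated directly from the multinomial law of $[C_0(0),\ldots,C_k(0)]$ will express $\mathbb{E}[e^{\lambda\underline{E}_1^{k-r+1}(\tau u(n))}]$ as a product over steps.

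The second step is to show that $\underline{C}_{k-r+2}(t)$ follows the same deterministic trajectory $\mu\binom{k}{r-2}(1+t/u(n))^{r-2}n\,(1+o(1))$ as the upper-chain quantity $\overline{C}_{k-r+2}(t)$. The two chains differ at this coordinate only through a $-\underline{R}_{k-r+2}(t)$ term whose conditional mean is $(k-r+2)\underline{C}_{k-r+2}(t)/N(t)=O(1)$ per step, so the accumulated discrepancy over $\tau u(n)$ steps is $O(u(n))=o(n)$ and vanishes against the linear-in-$n$ mean trajectory. Inserting $\log(1+x)=x+O(x^2)$ into the middle factor above and integrating against $t\in[0,\tau u(n)]$ will reproduce the upper-chain main term $\mu\binom{k}{r-1}(e^{(k-r+1)\lambda}-1)((1+\tau)^{r-1}-1)u(n)$; combined with the accumulated $-\lambda\tau u(n)$ from the $e^{-\lambda}$ factors and the initial contribution $\mu\binom{k}{r-1}(e^{(k-r+1)\lambda}-1)u(n)$, this produces exactly $\varphi_{k,r}(\mu,\lambda,\tau)u(n)$.

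The third step is to bound the contribution of the new factor $(1+(e^{-\lambda}-1)/(N(t)-k+r))^{\underline{E}_1^{k-r+1}(t)+t}$ that does not appear in the upper-chain analysis. Each step contributes $(\underline{E}_1^{k-r+1}(t)+t)(e^{-\lambda}-1)/(N(t)-k+r)$ to the log-MGF, which is $O(u(n)/n)$ using the a priori bound $\underline{E}_1^{k-r+1}(t)\le\overline{E}_1^{k-r+1}(t)+O(u(n))$ obtained either by pathwise coupling or by Markov's inequality on the upper-chain MGF of Theorem~\ref{thm:UG}. Summed over $\tau u(n)$ steps this yields $O(u(n)^2/n)$, which is absorbed by the stated error term.

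The main obstacle will be promoting the heuristic ``$\underline{C}_{k-r+2}(t)$ lies near its mean trajectory'' to a bound sharp enough at the level of the MGF. Because the per-step factors multiply across $\tau u(n)$ steps, the random fluctuations of $\underline{C}_{k-r+2}(t)$ must contribute at most $O(u(n)^2/n)$ to the log-MGF, which calls for either a Doob-martingale concentration with bounded increments or, more cleanly, an interchange of expectation and the $\log(1+x)$ Taylor expansion so that the recursion is driven by $\mathbb{E}[\underline{C}_{k-r+2}(t)]$ rather than by the random path. A secondary nuisance is the small-$t$ regime, where the $O(u(n)^{k-j+1}/n^{k-j+1})$ corrections to $p_j(n)$ from the start of Section~5 must be shown to fit within the same slack.
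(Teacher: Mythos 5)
Your one-step identity for the conditional MGF of $\underline{E}_1^{k-r+1}$ is correct, and your accounting of the two discrepancies between~\eqref{eq:MarkovL} and~\eqref{eq:MarkovU} identifies exactly the right issues. But the route you take from there has a genuine gap, and you have correctly located it yourself: after conditioning, the factor $\bigl(1+\tfrac{k-r+2}{N(t)}(e^{(k-r+1)\lambda}-1)\bigr)^{\underline{C}_{k-r+2}(t)}$ carries a \emph{random} exponent $\underline{C}_{k-r+2}(t)$ that is correlated with $\underline{E}_1^{k-r+1}(t)$ through the shared history of the chain. Your ``more cleanly'' fix --- interchanging expectation with the Taylor expansion so the recursion is driven by $\mathbb{E}[\underline{C}_{k-r+2}(t)]$ --- is not a legitimate operation here: $\mathbb{E}\bigl[e^{\lambda\underline{E}(t)}g(\underline{C}_{k-r+2}(t))\bigr]$ does not factor as $\mathbb{E}\bigl[e^{\lambda\underline{E}(t)}\bigr]\,g\bigl(\mathbb{E}[\underline{C}_{k-r+2}(t)]\bigr)$, and for $\lambda>0$ the upper tail of $\underline{E}$ is precisely where the MGF lives, so one cannot truncate the path to a high-probability set without losing control. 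The Doob-martingale route could in principle be pushed through, but it is a substantial auxiliary argument, not a remark; as written the proposal does not close the gap. (Step~3 has a related circularity: the exponent $\underline{E}(t)+t$ of the new factor is the very variable whose MGF you are computing, so bounding it a~priori by Markov and feeding that back into the MGF is not sound.)

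The paper sidesteps all of this by working with the \emph{joint} moment generating function, as in the proof of Theorem~\ref{thm:UG}: define $\underline{f}_t(\lambda_{k-r+1},\dotsc,\lambda_k):=\mathbb{E}\bigl[\exp\{\lambda_{k-r+1}(\underline{E}_1^{k-r+1}(t)+t)/(k-r+1)+\sum_{j\ge k-r+2}\lambda_j\underline{C}_j(t)\}\bigr]$. Conditioning on $\mathcal{F}_t$ then produces exactly the factors you wrote, but because each random exponent ($\underline{C}_j(t)$ or $\underline{E}(t)+t$) already appears in the MGF being tracked, those factors are absorbed into updated parameters $\lambda_j\mapsto\lambda'_j$, giving an \emph{exact} recursion $\underline{f}_{t+1}(\lambda_{k-r+1},\dotsc,\lambda_k)=\underline{f}_t(\lambda'_{k-r+1},\dotsc,\lambda'_k)$ with no concentration argument needed at all. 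The two features distinguishing the lower chain appear as (i) a shift of the $\underline{R}_j$ MGF from $e^{\lambda_{j-1}}$ to $e^{\lambda_{j-1}-\lambda_j}$, and (ii) an additional $O(1/n)$ per-step update to $\lambda_{k-r+1}$ from the $\underline{R}_1^{k-r+1}$ factor; both propagate through the inductive lemma on $\lambda_j^{(t)}$ as $O(u(n)/n)$ perturbations and land inside the $O(\max\{1,u(n)^2/n\})$ error once the initial multinomial MGF is evaluated. This is why the paper dismisses the proof as ``straightforward from the proof of Theorem~\ref{thm:UG}.'' If you want your approach to work, the shortest repair is to switch to the joint MGF rather than to try to patch the univariate recursion with concentration.
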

The proof is omitted since it is straightforward from the proof of Theorem~\ref{thm:UG}.
From Theorem~\ref{thm:LG}, if $\mu>\mu_\mathrm{c}(k,r)$, 
it holds
\begin{align}
\Pr\left(\bigcup_{t=0}^{\tau u(n)-1}\underline{E}_1^{k-r+1}(t)\le 0\right)&\le
\sum_{t=0}^{\tau u(n)-1}\Pr\left(\underline{E}_1^{k-r+1}(t)\le 0\right)\nonumber\\
&\le \sum_{t=0}^{\tau u(n)-1} \inf_{\lambda<0}\mathbb{E}\left[\exp\left\{\lambda\underline{E}_1^{k-r+1}(t)\right\}\right]\nonumber\\
&\le \exp\left\{\sup_{\tau'>0}\inf_{\lambda<0}\varphi_{k,r}(\mu,\lambda,\tau')u(n)+ O(\max\{u(n)^2/n,\log u(n)\})\right\}.
\label{eq:supinf}
\end{align}
Note that the above upper bound is independent of $\tau$.
In the same way, one can show that if $\mu<\mu_\mathrm{c}(k,r)$, for sufficiently large $\tau>0$, it holds
\begin{align*}
&\Pr\left(\underline{E}_1^{k-r+1}(\tau u(n))\le \eta u(n)-1\right)\le \exp\{-c_{\tau,\eta}u(n)\}
\end{align*}
for some constant $c_{\tau,\eta}>0$ depending on $\tau$ and $\eta$ which tends to infinity as $\tau\to\infty$ while $\eta$ is fixed.

Similarly to Theorem~\ref{thm:LG}, asymptotic analysis of the moment generating function for $\underline{C}_{k-r+2}(t)$ is obtained for $t=O(u(n))$. 
\begin{theorem}[Moment generating function of $\underline{C}_j(t)$]\label{thm:LG2}
Assume $m=\mu\frac{n^{r-1}}{u(n)^{r-2}}$ for arbitrary constant $\mu$ and $u(n)\in \omega(1)\cap o(n)$.
Then, for any constants $\tau>0$ and $\lambda_j$,
$\mathbb{E}[\exp\{\lambda_j\underline{C}_j(\tau u(n))\}]=
\exp\{ \varphi^{(j)}_{k,r}(\mu,\lambda_j,\tau) \frac{n^{j-k+r-1}}{u(n)^{j-k-r-2}}+ O(\frac{u(n)^{j-k+r-1}}{n^{j-k+r-1}}\max\{1,u(n)^2/n\})\}$
where
\begin{equation*}
\varphi^{(j)}_{k,r}(\mu,\lambda,\tau):= \mu \left(\exp\{\lambda\}-1\right)\binom{k}{k-j}(1+\tau)^{k-j}.
\end{equation*}
\end{theorem}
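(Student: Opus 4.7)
I plan to mirror the approach used in the proof of Theorem~\ref{thm:UG}, but specialised to a single coordinate of the Markov chain~\eqref{eq:MarkovL}, which makes the analysis considerably more direct because we do not need to track the edge count $\underline{E}_1^{k-r+1}$.  The central observation is that for $j\in\{k-r+2,\ldots,k\}$ the sub-chain on $(\underline{C}_{k-r+2},\ldots,\underline{C}_k)$ decouples from $\underline{E}_1^{k-r+1}$ and has product-form dynamics: conditionally on the current state, each of the $m$ constraints evolves independently as a pure-death chain in which a constraint of current effective degree $j'$ is decremented at step $t+1$ with probability $j'/N(t)$.  Because the initial effective degrees are i.i.d.\ with distribution $p_d(n)$, the single-constraint trajectories are also i.i.d., and therefore $\underline{C}_j(\tau u(n)) = \sum_{a=1}^{m}\mathbb{I}[D_a(\tau u(n))=j]$ is an exact Binomial sum with parameter $q_j(\tau u(n))$ equal to the marginal probability that one constraint sits at degree $j$ at time $\tau u(n)$.

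First I would compute $q_j(\tau u(n))$ in closed form.  Iterating the per-variable survival probability gives $\rho(t):=1-\prod_{s<t}(1-1/N(s)) = \tau u(n)/n + O(u(n)^{2}/n^{2})$ at $t=\tau u(n)$, and up to the same order the conditional law $D_a(\tau u(n))\mid D_a(0)=d$ is $\mathrm{Binom}(d,1-\rho)$.  Substituting the expansion $p_d(n)=\binom{k}{k-d}(u(n)/n)^{k-d}+O((u(n)/n)^{k-d+1})$ from the excerpt, using the identity $\binom{k}{k-d}\binom{d}{j}=\binom{k}{k-j}\binom{k-j}{d-j}$, and then summing $\sum_{i=0}^{k-j}\binom{k-j}{i}\tau^{i}$ by the binomial theorem collapses everything to
\begin{equation*}
q_j(\tau u(n))=\binom{k}{k-j}\frac{u(n)^{k-j}}{n^{k-j}}(1+\tau)^{k-j}+O\!\Bigl(\frac{u(n)^{k-j+1}}{n^{k-j+1}}\Bigr).
\end{equation*}

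Second, from $\underline{C}_j(\tau u(n))\sim \mathrm{Binom}(m,q_j(\tau u(n)))$, the MGF equals $(1+(e^{\lambda_j}-1)q_j(\tau u(n)))^{m}$, and expanding its logarithm gives
\begin{equation*}
\log\mathbb{E}[e^{\lambda_j\underline{C}_j(\tau u(n))}] = m(e^{\lambda_j}-1)\,q_j(\tau u(n)) + O\!\bigl(m\,q_j(\tau u(n))^{2}\bigr).
\end{equation*}
Substituting $m=\mu n^{r-1}/u(n)^{r-2}$ together with the closed form of $q_j$ produces the leading exponent $\varphi^{(j)}_{k,r}(\mu,\lambda_j,\tau)\,n^{j-k+r-1}/u(n)^{j-k+r-2}$ of the theorem, and careful bookkeeping of the sub-leading terms lets one absorb the remainder into the stated $O((u(n)/n)^{j-k+r-1}\max\{1,u(n)^{2}/n\})$.

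The principal technical obstacle is controlling the cumulative error coming from two separate approximations.  The first is the replacement of the exact pure-death chain dynamics by the ``independent-variables'' model used to derive $q_j$: a single step of the chain differs from letting each surviving variable independently survive with probability $1-1/N(t)$ by $O(1/N(t)^{2})$ per constraint, and summing over $\tau u(n)$ iterations gives a relative correction of order $u(n)/n^{2}$.  The second is the linearisation $\log(1+x)=x+O(x^{2})$ used to pass from the binomial log-MGF to its leading term.  I would handle both exactly as in the proof of Theorem~\ref{thm:UG}, by iterating the one-step identity $\mathbb{E}[e^{\alpha \underline{R}_j(t)}\mid \underline{C}_j(t)]=(1+(e^{\alpha}-1)j/N(t))^{\underline{C}_j(t)}$ on the joint MGF of $(\underline{C}_{k-r+2},\ldots,\underline{C}_k)$ and then showing that every surplus factor of $q_j$ forces an additional $u(n)/n$ into the correction, so that the combined error is bounded by the stated quantity.
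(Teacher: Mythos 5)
Your proposal is essentially the same route the paper intends (the paper explicitly says the proof is ``straightforward from the proof of Theorem~\ref{thm:UG}'', i.e., the same moment-generating-function recursion), and your third paragraph indeed falls back to exactly that recursion for the error bookkeeping. The genuinely new and useful layer you add is the observation that the sub-chain $(\underline{C}_{k-r+2},\ldots,\underline{C}_k)$ in~\eqref{eq:MarkovL} is a superposition of $m$ i.i.d.\ single-constraint pure-death trajectories, so that $\underline{C}_j(t)\sim\mathrm{Binom}(m,q_j(t))$ \emph{exactly}. This is consistent with the chain's dynamics (the conditional independence of the binomial increments $\underline{R}_j$ across $j$ is precisely the statement that the per-constraint Bernoulli thinnings are all independent), and it makes the leading-order exponent drop out in one line via the identity $\binom{k}{d}\binom{d}{j}=\binom{k}{j}\binom{k-j}{d-j}$ and the binomial theorem; the paper's recursion achieves the same thing but obscures the product structure. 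Two small flags: first, equation~\eqref{eq:MarkovL} contains an obvious typo ($-\underline{R}_{j+1}+\underline{R}_{j+1}$ cancels); your reading $\underline{C}_j(t+1)=\underline{C}_j(t)-\underline{R}_j(t)+\underline{R}_{j+1}(t)$ is the only one consistent with both the $(1+\tau)^{k-j}$ factor in $\varphi^{(j)}_{k,r}$ and the product-form interpretation, so it should be stated as the correction you are adopting. Second, the theorem's error term as printed is almost certainly itself a typo: the leading exponent (after correcting $u(n)^{j-k-r-2}$ to $u(n)^{j-k+r-2}$) is $\Theta(n^{j-k+r-1}/u(n)^{j-k+r-2})$, and the relative error coming from $p_d(n)=\binom{k}{d}(u(n)/n)^{k-d}(1+O(\max\{1/u(n),u(n)/n\}))$ and from the one-step linearisations is $O(\max\{1/u(n),u(n)/n\})$, which gives an absolute error of order $\frac{n^{j-k+r-1}}{u(n)^{j-k+r-1}}\max\{1,u(n)^2/n\}$ --- the reciprocal of what is printed. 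Your closing sentence accepts the printed bound $O((u(n)/n)^{j-k+r-1}\max\{1,u(n)^2/n\})$; that quantity is far too small to be achievable by the method, so you should flag the discrepancy rather than promise to absorb the remainder into it.
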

The proof of this theorem is also omitted since it is straightforward from the proof of Theorem~\ref{thm:UG}.
From Theorem~\ref{thm:LG2}, it holds
\begin{align*}
&
\Pr\left(\underline{C}_{k-r+2}(\tau u(n))\le([(k-r+2)(k-r+1)]^{-1}+\epsilon)n\right)
\le
\frac{\mathbb{E}\left[\exp\left\{\lambda_{k-r+2} \underline{C}_{k-r+2}(\tau u(n))\right\}\right]}{\exp\{\lambda_{k-r+2}([(k-r+2)(k-r+1)]^{-1}+\epsilon)n\}}\\
&\le
\frac{\mathbb{E}\left[\exp\left\{\lambda_{k-r+2} \underline{C}_{k-r+2}(u(n)/(r-2))\right\}\right]}{\exp\{\lambda_{k-r+2}([(k-r+2)(k-r+1)]^{-1}+\epsilon)n\}}\\
&=
\exp\left\{\mu \left(\exp\{\lambda_{k-r+2}\}-1\right)\frac{1}{(k-r+2)(k-r+1)}\binom{k}{r}\frac{r(r-1)^{r-1}}{(r-2)^{r-2}} n - \lambda_{k-r+2}([(k-r+2)(k-r+1)]^{-1}+\epsilon)n\right\}\\
&=
\exp\left\{[(k-r+2)(k-r+1)]^{-1}\frac{\mu}{\mu_\mathrm{c}(k,r)} \left(\exp\{\lambda_{k-r+2}\}-1\right) n - \lambda_{k-r+2}([(k-r+2)(k-r+1)]^{-1}+\epsilon)n\right\}
\end{align*}
for any $\lambda_{k-r+2}\le 0$.
Hence, if $\mu>\mu_\mathrm{c}(k,r)$, for sufficiently small $\epsilon>0$, there is $\delta>0$ such that
\begin{align*}
\Pr\left(\underline{C}_{k-r+2}(u(n)/(r-2))\le([(k-r+2)(k-r+1)]^{-1}+\epsilon)n\right)&\le
\exp\{-\delta n\}.
\end{align*}
From the argument in the proof of Lemma~\ref{lem:2large}, a subgraph of the bipartite graph at the $\tau u(n)$-th step
including all variable vertices and all constraint vertices of the degree $k-r+2$ has a giant component with probability exponentially close to 1
with respect to $n$.
The number of variables which can be removed is $E_1^{k-r+1}(\tau u(n))$ which is larger than $\eta u(n)$ with probability
 at least $1-\exp\{-c_{\tau,\eta}u(n)\}$ for sufficiently large $c_{\tau,\eta}$ when sufficiently large $\tau>0$ is chosen.
In that case, the $(k-r+2)$-peeling algorithm removes linearly many variables with probability $1-\rho^{\eta u(n)}$ where the size of 
the giant component is $(1-\rho)N(\tau u(n))$ .
If one chooses sufficiently large $\eta>0$, the probability that the $(k-r+2)$-peeling algorithm fails to remove linearly many variable
is dominated by~\eqref{eq:supinf}.

\section{Evolution of the moment generating function}\label{sec:evo}
In this section, the proof of Theorem~\ref{thm:UG} is shown.
The moment generating function for $[(\overline{E}_1^{k-r+1}(t)+t)/(k-r+1), \overline{C}_{k-r+2}(t),\dotsc,\overline{C}_k(t)]$ is defined as
\begin{align*}
\overline{f}_t(\lambda_{k-r+1},\dotsc,\lambda_k):=\mathbb{E}\left[
\exp\left\{\lambda_{k-r+1}(\overline{E}_1^{k-r+1}(t)+t)/(k-r+1)+\lambda_{k-r+2}\overline{C}_{k-r+2}(t)+\dotsb+\lambda_k\overline{C}_k(t)\right\}\right].
\end{align*}
From~\eqref{eq:MarkovU}, one obtains a recursive formula
\begin{align*}
\overline{f}_{t+1}(\lambda_{k-r+1},\dotsc,\lambda_k)&=
\mathbb{E}\left[\exp\left\{\lambda_{k-r+1}(\overline{E}_1^{k-r+1}(t+1)+t+1)/(k-r+1)+\lambda_{k-r+2}\overline{C}_{k-r+2}(t)+\dotsb+\lambda_k\overline{C}_k(t+1)\right\}\right]\\
&=\mathbb{E}\Biggl[\exp\left\{\lambda_{k-r+1}(\overline{E}_1^{k-r+1}(t)+t)/(k-r+1)+\lambda_{k-r+2}\overline{C}_{k-r+2}(t)+\dotsb+\lambda_k\overline{C}_k(t)\right\}\\
&\qquad\cdot
\exp\left\{\lambda_{k-r+1}\overline{R}_{k-r+2} + \lambda_{k-r+2}\overline{R}_{k-r+3}+\dotsb+\lambda_{k-1}\overline{R}_k\right\}\Biggr]\\
&=\mathbb{E}\Biggl[\exp\left\{\lambda_{k-r+1}(\overline{E}_1^{k-r+1}(t)+t)/(k-r+1)+\lambda_{k-r+2}\overline{C}_{k-r+2}(t)+\dotsb+\lambda_k\overline{C}_k(t)\right\}\\
&\quad\cdot\prod_{j=k-r+2}^{k}\left(1-\frac{j}{N(t)}+\frac{j}{N(t)}\exp\{\lambda_{j-1}\}\right)^{\overline{C}_j(t)}\Biggr]\\
&= \overline{f}_t(\lambda_{k-r+1},\lambda'_{k-r+2},\dotsc,\lambda'_k)
\end{align*}
where
\begin{align*}
\lambda'_j &:= \lambda_j + \log\left(1-\frac{j}{N(t)}+\frac{j}{N(t)}\exp\{\lambda_{j-1}\}\right)
\end{align*}
for $j=k-r+2,k-r+3,\dotsc,k$.
Let $\lambda_{k-r+1}^{(s)}:= \lambda_{k-r+1}$ for $s=1,2,\dotsc,t$.
For $j=k-r+2,k-r+3,\dotsc,k$,
$\lambda^{(0)}_j:= 0$ and
\begin{align*}
\lambda^{(s)}_j &:= \lambda^{(s-1)}_j + \log\left(1-\frac{j}{N(t-s+1)}+\frac{j}{N(t-s+1)}\exp\{\lambda^{(s-1)}_{j-1}\}\right),
\end{align*}
for $s=1,2,\dotsc,t$.
Then, it holds
\begin{align*}
\mathbb{E}[\exp\{\lambda_{k-r+1} (\overline{E}_1^{k-r+1}(t)+t)/(k-r+1)\}]
=\overline{f}_{t}(\lambda_{k-r+1},0,\dotsc,0)
&= \overline{f}_0(\lambda_{k-r+1}^{(t)},\lambda_{k-r+2}^{(t)},\dotsc,\lambda_k^{(t)}).
\end{align*}

\begin{lemma}
For $t=O(u(n))$ and $u(n)=o(n)$, it holds
\begin{align*}
\exp\{\lambda_j^{(t)}\} &= 
 1 + \binom{j}{k-r+1}\frac{t^{j-k+r-1}}{n^{j-k+r-1}}\left(\exp\{\lambda_{k-r+1}\}-1\right)
 + O\left(\frac{u(n)^{j-k+r-2}}{n^{j-k+r-1}}\max\left\{1,\frac{u(n)^2}{n}\right\}\right)
\end{align*}
for $j=k-r+1,k-r+2,\dotsc,k$.
\begin{proof}
The lemma is shown by induction on $j$. 
The lemma obviously holds for $j=k-r+1$.
Assume the lemma holds for $j=j_0-1\ge 1$, then
\begin{align*}
\lambda_{j_0}^{(t)} &= \sum_{s=0}^{t-1}\log\left(1-\frac{j_0}{N(t-s)}+\frac{j_0}{N(t-s)}\exp\{\lambda_{j_0-1}^{(s)}\}\right)\\
&= \sum_{s=0}^{t-1}\frac{j_0}{N(t-s)}\left(\exp\{\lambda_{j_0-1}^{(s)}\}-1\right) + O\left(\frac{u(n)^{2(j_0-k+r-1)-1}}{n^{2(j_0-k+r-1)}}\right)\\
&= \sum_{s=0}^{t-1}\frac{j_0}{n}\left(\exp\{\lambda_{j_0-1}^{(s)}\}-1\right) + O\left(\frac{u(n)^{j_0-k+r}}{n^{j_0-k+r}}\right)\\
&= \sum_{s=0}^{t-1}\frac{j_0}{n}\binom{j_0-1}{k-r+1}\frac{s^{j_0-k+r-2}}{n^{j_0-k+r-2}}(\exp\{\lambda_{k-r+1}\}-1) + O\left(\frac{u(n)^{j_0-k+r-2}}{n^{j_0-k+r-1}}\max\left\{1,\frac{u(n)^2}{n}\right\}\right)\\
&= \binom{j_0}{k-r+1}\frac{t^{j_0-k+r-1}}{n^{j_0-k+r-1}}(\exp\{\lambda_{k-r+1}\}-1) + O\left(\frac{u(n)^{j_0-k+r-2}}{n^{j_0-k+r-1}}\max\left\{1,\frac{u(n)^2}{n}\right\}\right).
\qedhere
\end{align*}
\end{proof}
\end{lemma}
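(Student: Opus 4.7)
The plan is to prove the claim by induction on $j$, starting from the base case $j = k-r+1$ and stepping upward to $j = k$. For the base case, the definition $\lambda_{k-r+1}^{(s)} := \lambda_{k-r+1}$ for all $s$ gives $\exp\{\lambda_{k-r+1}^{(t)}\} = \exp\{\lambda_{k-r+1}\}$, which matches the claimed formula exactly (since $\binom{k-r+1}{k-r+1} = 1$ and $t^0/n^0 = 1$), with no error term needed.

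For the inductive step, assume the claim holds for $j-1$. Telescoping the recursion defining $\lambda_j^{(s)}$ from $\lambda_j^{(0)} = 0$ yields
\begin{equation*}
\lambda_j^{(t)} = \sum_{s=0}^{t-1} \log\!\left(1 + \frac{j}{N(t-s)}\bigl(\exp\{\lambda_{j-1}^{(s)}\}-1\bigr)\right).
\end{equation*}
I would expand $\log(1+x) = x + O(x^2)$ inside the sum. By the inductive hypothesis, $\exp\{\lambda_{j-1}^{(s)}\}-1$ is of order $s^{j-k+r-2}/n^{j-k+r-2}$, so the per-term argument is of order $u(n)^{j-k+r-2}/n^{j-k+r-1}$ after accounting for the $j/N(t-s)$ prefactor. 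Next, I would use $N(t-s) = n - u(n) - (t-s) = n(1 - O(u(n)/n))$ to replace $N(t-s)$ by $n$ at a controlled cost.

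The leading contribution then simplifies to $\sum_{s=0}^{t-1} \frac{j}{n}\binom{j-1}{k-r+1}\frac{s^{j-k+r-2}}{n^{j-k+r-2}}(\exp\{\lambda_{k-r+1}\}-1)$, and the Faulhaber-type approximation $\sum_{s=0}^{t-1} s^{j-k+r-2} = t^{j-k+r-1}/(j-k+r-1) + O(t^{j-k+r-2})$ together with the combinatorial identity $\frac{j}{j-k+r-1}\binom{j-1}{k-r+1} = \binom{j}{k-r+1}$ produces the claimed main term. Finally, since $\lambda_j^{(t)}$ itself is small, the expansion $\exp\{\lambda_j^{(t)}\} = 1 + \lambda_j^{(t)} + O((\lambda_j^{(t)})^2)$ introduces only a lower-order error that fits inside the stated remainder.

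The main obstacle is bookkeeping of the error terms: three separate sources must each be shown to be dominated by the common remainder $O(u(n)^{j-k+r-2}/n^{j-k+r-1} \cdot \max\{1, u(n)^2/n\})$, namely (i) the $O(x^2)$ from $\log(1+x)$, which when summed over $t = O(u(n))$ steps contributes the $u(n)^2/n$ enhancement, (ii) the multiplicative correction from replacing $N(t-s)$ by $n$, and (iii) the Faulhaber approximation of the integer sum. The $\max\{1, u(n)^2/n\}$ factor is needed precisely because (i) dominates when $u(n)^2/n$ is large, while the linear lower-order terms suffice otherwise. Once the inductive hypothesis is carried in a form that already tracks this factor, the induction closes cleanly with a single verification that the cross terms remain inside the stated remainder.
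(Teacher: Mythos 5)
Your plan follows the paper's proof essentially step for step: induction on $j$ with base case $j=k-r+1$, telescoping the recursion for $\lambda_j^{(t)}$, expanding the logarithm to first order, replacing $N(t-s)$ by $n$, inserting the inductive hypothesis, applying Faulhaber's formula together with the identity $\frac{j}{j-k+r-1}\binom{j-1}{k-r+1}=\binom{j}{k-r+1}$, and finally exponentiating. So the structure and all key formulas agree with the paper.

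However, your error bookkeeping misattributes the $\max\{1,u(n)^2/n\}$ factor. Writing $a:=j-k+r-1\ge 1$, the $O(x^2)$ remainder from the log expansion sums to $O(u(n)^{2a-1}/n^{2a})$, and the ratio of this to the Faulhaber error $O(u(n)^{a-1}/n^{a})$ is $(u(n)/n)^a\to 0$; so source (i) is \emph{always} subdominant and never produces the enhancement. What actually produces the $u(n)^2/n$ factor is source (ii), the replacement $1/N(t-s)=1/n+O(u(n)/n^2)$: multiplied by the inductive-hypothesis magnitude $O(s^{a-1}/n^{a-1})$ and summed over $t=O(u(n))$ steps this gives $O(u(n)^{a+1}/n^{a+1})=O\bigl(\frac{u(n)^{a-1}}{n^{a}}\cdot\frac{u(n)^2}{n}\bigr)$, exactly matching the enhanced remainder when $u(n)^2/n>1$; the propagated inductive error has the same form. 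With this correction your bookkeeping closes the induction, and the approach is otherwise the same as the paper's.
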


Since $[C_0(0),\dotsc,C_k(0)]$ obeys the multinomial distribution $\mathrm{Multinom}(m,p_0(n),\dotsc,p_k(n))$,
it holds for $t=\tau u(n)$ and $m=\mu\frac{n^{r-1}}{u(n)^{r-2}}$ that
\begin{align*}
&\overline{f}_0(\lambda_{k-r+1}^{(t)},\lambda_{k-r+2}^{(t)},\dotsc,\lambda_k^{(t)})
=\left(p_0(n) + \sum_{j=1}^{k-r+1}p_j(n)\exp\left\{\frac{j}{k-r+1}\lambda_{k-r+1}\right\}
+ \sum_{j=k-r+2}^k p_j(n)\exp\left\{\lambda_j^{(t)}\right\}\right)^m\\
&=\left(1+\sum_{j=k-r+1}^k p_j(n)\binom{j}{k-r+1}\frac{t^{j-k+r-1}}{n^{j-k+r-1}}(\exp\{\lambda_{k-r+1}\}-1)
 + O\left(\frac{u(n)^{r-2}}{n^{r-1}}\max\left\{1,\frac{u(n)^2}{n}\right\}\right)\right)^m\\
&=\left(1+\frac{u(n)^{r-1}}{n^{r-1}} \left(\exp\{\lambda_{k-r+1}\}-1\right)\sum_{j=k-r+1}^k \binom{k}{j}\binom{j}{k-r+1}\tau^{j-k+r-1}
 + O\left(\frac{u(n)^{r-2}}{n^{r-1}}\max\left\{1,\frac{u(n)^2}{n}\right\}\right)\right)^m\\
&=\exp\left\{u(n)\mu \left(\exp\{\lambda_{k-r+1}\}-1\right)\binom{k}{r-1}(1+\tau)^{r-1}+ O\left(\max\left\{1,\frac{u(n)^2}{n}\right\}\right)\right\}.
\end{align*}
From
\begin{align*}
\mathbb{E}[\exp\{\lambda (\overline{E}_1^{k-r+1}(\tau u(n))+\tau u(n))/(k-r+1)\}]
&=\exp\left\{u(n)\mu \left(\exp\{\lambda\}-1\right)\binom{k}{r-1}(1+\tau)^{r-1}+ O\left(\max\left\{1,\frac{u(n)^2}{n}\right\}\right)\right\}
\end{align*}
one obtains
\begin{align*}
\mathbb{E}[\exp\{\lambda \overline{E}_1^{k-r+1}(\tau u(n))\}]
&=\exp\left\{u(n)\left[\mu \left(\exp\{(k-r+1)\lambda\}-1\right)\binom{k}{r-1}(1+\tau)^{r-1}-\lambda\tau\right]+ O\left(\max\left\{1,\frac{u(n)^2}{n}\right\}\right)\right\}.
\end{align*}

\section{Conclusion and discussion}\label{sec:condis}
In this paper, the tight thresholds for small and large stopping sets on randomly generated bipartite graph are shown.
This result gives the tight threshold for the LP relaxation attack for Goldreich's generator using $(r-1)$-wise independent local functions.
When $u(n)=O(\log n)$, the order $\frac{n^{r-1}}{u(n)^{r-2}}$ is much larger than $n^{r/2}$.
Hence, the LP using the local marginal polytope is suboptimal.
This problem can be avoided by using a tighter polytope
\begin{equation*}
\mathrm{CVM}:=\Bigl\{(q_C(\bm{x}_C))_{C\in \mathcal{C}_F, x_C\in\{0,1\}^{|C|}}\mid
 q_C\in \mathrm{DIST}_{|C|},\,\forall C\in \mathcal{C}_F, q_{C'}(\bm{x}_{C'})=\sum_{\bm{x}_{C\setminus C'}} q_C(\bm{x}_C),\,\forall C'\subseteq C\Bigr\}
\end{equation*}
where $\mathcal{C}_F\subseteq 2^{[n]}$ is a set includes $\partial (a)$ and its subsets for all $a\in[m]$.
One can easily find an example of $(r-1)$-wise independent function for which the LP relaxation using $\mathrm{CVM}$
can only have integral zero-optimal solutions when $m=C n^{r/2}$ for sufficiently large constant $C$, e.g., $r$-XORSAT.
Note that the above polytope corresponds to the cluster variation method in statistical physics~\cite{yedidia2005constructing}.
One can also consider the generalized BP and its linearization corresponding to the above polytope~\cite{yedidia2005constructing}.
The derivation of the threshold constant for the stronger LP is an interesting problem.

\bibliographystyle{IEEEtran}
\bibliography{IEEEabrv,bibliography}

\end{document}